\newtheorem{theorem}{Theorem}
\newtheorem{definition}{Definition}
\newtheorem{game}{Game}
\newtheorem{task}{Task}
\newtheorem{lemma}{Lemma}
\newtheorem{example}{Example}
\begin{document}

\def\tit{Device-independent test of causal order and relations to fixed-points}
\title{\tit}

\author{\"Amin Baumeler$^{1,2}$ and Stefan Wolf$^{1,2}$}

\address{$^1$ Faculty of Informatics, Universit\'{a} della Svizzera italiana, Via G. Buffi 13, 6900 Lugano, Switzerland}
\address{$^2$ Facolt\`{a} indipendente di Gandria, Lunga scala, 6978 Gandria, Switzerland}
\ead{baumea@usi.ch}

\begin{abstract}
	Bell non-local correlations cannot be naturally explained in a
	fixed causal structure. This serves as a motivation for considering
	models where no global assumption is made beyond logical
	consistency. The assumption of a fixed causal
	order between a set of parties, together with free
	randomness, implies device-independent inequalities --- just as the assumption of locality does.
	It is known that local validity of quantum theory is consistent
	with violating such inequalities. Moreover, for three parties or more,
	even the (stronger) assumption of local {\em classical\/} probability theory plus logical
	consistency allows for violating causal inequalities.
	Here, we show that a classical environment (with which the parties
	interact), possibly containing loops, is logically consistent if and only
	if whatever the involved parties do, there is exactly one fixed-point,
	the latter being representable as a mixture of deterministic fixed-points.
	We further show that the non-causal view allows for a model
	of computation strictly more powerful than computation in a
	world of fixed causal orders.
\end{abstract}
\tableofcontents
\def\leftmark{}

\maketitle

\def\leftmark{}
\section{Introduction}
Device-independent tests of assumptions depend {\em only\/} on some input-output behaviour, whereas the ``internals'' of the concrete physical systems, as well as the used devices are ignored.
Examples of such assumptions are free randomness, the impossibility of superluminal signaling, or predefined causal structures~\cite{Bell,Scarani:2002br,Colbeck:2012ie,Wood:2015jf}.
A combination of basic assumptions leads to inequalities composed of probabilities to observe certain outcomes.
Whenever a theory or an experiment {\em violates\/} a device-independent inequality, we can apply the contrapositive, and thus are forced to {\em drop\/} at least one of the assumptions made.
One of the most prominent findings of this line of reasoning is Bell-nonlocality~\cite{Bell:1964ws}, which is not only of fundamental interest but also leads to a series of device-independent protocols in cryptography~\cite{Mayers:1998,Barrett:2005ej,Colbeck:2006vo,Pironio:2010bu,Colbeck:gb,Scarani:2012vq,Vazirani:2014hu}.

\subsection{Historical background on space, time, and causality}

The debate about as how fundamental space and time
are to be seen has a long history in within natural philosophy. In pre-Socratic
time, the opposite standpoints on the question have arisen in the views of
{\em Parmenides\/} as opposed to {\em Heraclitus\/}: For the latter, the stage
set by a fundamental space-time structure is where the play of permanent
change | for him synonymous to existence | happens. Parmenides' world view,
on the other hand, is static and such that space and, in particular, time emerge
only subsequently and only subjectively. The described opposition can be seen
as a predecessor of the famous debate between {\em Newton\/} and {\em Leibniz\/}~\cite{Leibniz:2000tq},
centuries later. Whereas Newton starts from an initially given and static space-time,
Leibniz was criticizing that view: Space, for instance, is for him merely {\em relational
and not absolute}. The course of occidental science decided to go for Newton's
(overly successful) picture, until Leibniz' relational view was finally adopted by
{\em Mach}. Indeed, Mach's principle states that {\em inertial forces are purely relational},
and it was the crystallization point of {\em Einstein\/}'s general relativity although
the latter did, in the end, not satisfy the principle; however, it does propose a
{\em dynamic\/} space-time structure (still absolute, though) in which, additionally,
space and time become closely intertwined where they have been seen independently
in Newton's picture.

The absolute space-time resulting in general relativity has been challenged by the
following observations. First, {\em G\"odel\/}~\cite{Godel:1949eb} showed the possibility of solutions
to the general-relativistic field equations corresponding to closed space-time
curves. Second, {\em Bell\/}'s non-local correlations~\cite{Bell:1964ws} do not seem to be well
explainable according to {\em Reichenbach's principle\/}~\cite{Reichenbach:1956vl}, stating that any correlation
between two space-time events in a causal structure must be due to a common cause
or a direct influence from one event to the other. These facts can serve as a motivation
to drop the causal structure as being fundamental in the first place. That idea, as
sketched here, is not as novel as it may seem to be. In 1913, Russell~\cite{Russell:2006vn} wrote that
``the law of causality [\ldots] is a relic of a bygone age, surviving, like the monarchy,
only because it is erroneously supposed to do no harm.''
In the view adopted here, the causal space-time structure arises only {\em together with
instead of prior to\/} the pieces of classical information coming to existence,~{\it e.g.},
in the context of a quantum-measurement process\footnote{The key to understanding
	{\em how exactly\/} this happens is perhaps hidden in thermodynamics and a suitable
interpretation of quantum theory.}. A consequence of taking that standpoint is that the definition
by Colbeck and Renner~\cite{Colbeck:2011hw} of freeness of randomness in terms of causal structure can be
turned around: May the {\em free random bit\/} be the fundamental concept from which
the usual space-time causal structure emerges?

\subsection{Results}
This article studies {\em causal inequalities\/} that are derived from the two basic assumptions of {\em free randomness\/} and {\em predefined causal structures}, and their violations.
The inequalities are obeyed by input-output behaviours between multiple parties that are {\em consistent\/} with a predefined causal structure.
Any violation of such inequalities forces us to drop at least one of the two assumptions.
Indeed, in Section~\ref{sec:causality} we define causal order {\em based\/} on free randomness;
in that perspective at least, a fundamental causal structure {\em seems unnecessary}.
In theory, if one drops the assumption of a global time and adheres to {\em logical consistency only}, such inequalities can be violated --- quantumly (see Section~\ref{sec:quantum}) as well as classically (see Section~\ref{sec:classical}).
Thus, we are lead to the statement: {\em The assumption of a predefined causal structure is not a logical necessity}.

In Section~\ref{sec:fixedpoint} we show a relation between logical consistency and the {\em uniqueness of fixed-points\/} in functions.
By pushing this line of research further, we obtain a new class of circuits that are logically consistent, yet where the gates can be connected arbitrarily (see Section~\ref{sec:circuit}).
Such circuits describe a new model of computational that is strictly more powerful than the standard circuit model.
We conclude the work with a list of open questions.

\subsection{Related work}
Hardy~\cite{Hardy:2005wj,Hardy:2007bk} challenged the notion of a global time in quantum theory.
His main motivation is to reconcile quantum theory with general relativity.
While indeterminism, a feature of the former theory, is absent in the latter, the latter theory is more general compared to the former as its space-time is dynamic.
Thus, a theory that is probabilistic and has a dynamic space-time is a reasonable candidate for quantum gravity.
Chiribella, D'Ariano, and Perinotti~\cite{Chiribella:2009bh} and Chiribella, D'Ariano, Perinotti, and Valiron~\cite{Chiribella:2013bk} introduced the notion of ``quantum combs,'' which are higher-order transformations, {\it e.g.} transformations from operations to operations.
An interesting feature of these ``quantum combs'' is that they allow for superpositions of causal orders~\cite{Chiribella:2012jg}.
Such superpositions have lead to a computational advantage in certain tasks~\cite{Chiribella:2012jg,Colnaghi:2012dv,Araujo:2014kf,Feix:2015ww}.
In general, ``quantum combs'' can also describe resources beyond superpositions of causal orders.
Another framework to study such resources was introduced by Oreshkov, Costa, and Brukner~\cite{Oreshkov:2012uh} (see also~\cite{Brukner:2014if}).

\section{Definitions of causal relations and orders}
\label{sec:causality}
We describe causal relations between random variables, and in a next step, between parties.
Since we are interested in theories where causal structures are not fundamental, we distinguish between {\em input\/} and {\em output\/} random variables such that a structure emerges from these notions.
\begin{definition}[Input and output]
	\rm
	An {\em input\/} random variable carries a hat, {\it e.g.},~$\hat A$.
	The distribution over input random variables does not need to be specified.
	An {\em output\/} random variable is denoted by a single letter, {\it e.g.},~$X$.
	Output random variables come with a distribution that is conditioned on the inputs, {\it e.g.},~$P_{X|\hat A}$.
\end{definition}
This difference allows us to define {\em causal future\/} and {\em causal past\/} for random variables.
\begin{definition}[Causal future and causal past]
	\label{def:causality}
	\rm
	Let~$\hat A$ be an input and~$X$ be an output random variable.
	If and only if~$\hat A$ is correlated with~$X$, {\it i.e.},~$\exists P_{\hat A}:\,P_{\hat A}P_X\not=P_{\hat A}P_{X|\hat A}$, then~$\hat A$ is said to be {\em in the causal past of~$X$\/} and, equivalently~$X$ is said to be {\em in the causal future of~$\hat A$}.
	Furthermore,~$\hat A$ is called the {\em cause\/} and~$X$ is called the {\em effect}.
	This relation is denoted by~$\hat A\preceq X$.
\end{definition}
Such a definition follows the {\em interventionists'\/} approach to causality, {\it e.g.,} as defined by Woodward.
The intuition behind this definition is that we are allowed to manipulate only certain physical systems.
If such a manipulation influences another physical system, then the former manipulation {\em causes\/} the latter (see Figure~\ref{fig:causality}).
\begin{figure}
	\centering
	\begin{tikzpicture}
		\node[draw,shape=rectangle,minimum width=2cm,minimum height=1cm] (S) {};
		\node[right=0.3cm of S.west,inner sep=0pt,outer sep=0pt] (St) {$X$};
		\node[left=0.3cm of S.east,draw,shape=rectangle,minimum width=0.8cm,minimum height=0.3cm] (M) {};
		\draw (M.east) arc (70:110:1.18);
		\draw[-] (M.south)++(0cm,0.05cm) -- ++(0.3cm,0.2cm);
		\node[draw,shape=rectangle,minimum width=2cm,minimum height=1cm,right=of S,inner sep=0pt,outer sep=0pt] (T) {};
		\node[right=0.3cm of T.west,inner sep=0pt,outer sep=0pt] (Tt) {$\hat A$};
		\node[left=0.4cm of T.east,draw,shape=circle,inner sep=0pt,outer sep=0pt,minimum size=0.1cm] (K) {};
		\node[draw,shape=circle,inner sep=0pt,outer sep=0pt,minimum size=0.4cm] (K2) at (K) {};
		\foreach \angle in {
			-13+0*30,
			-13+1*30,
			-13+2*30,
			-13+3*30,
			-13+4*30,
			-13+5*30,
			-13+6*30,
			-13+7*30,
			-13+8*30,
			-13+9*30,
			-13+10*30,
			-13+11*30,
			-13+12*30
		}
		{
			\draw (K.center)++(\angle:0.05cm) -- +(\angle:0.15cm);
		}
		\draw[->] (T.west) -- (S.east);
	\end{tikzpicture}
	\caption{If the input random variable~$\hat A$ is manipulated and the output random variable~$X$ is correlated to~$\hat A$, then~$\hat A$ is the {\em cause\/} of~$X$.}
	\label{fig:causality}
\end{figure}
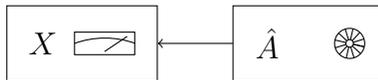
This allows us to derive the causal structure from observed correlations --- the causal structure {\em emerges\/} from the correlations.
Note that this definition does not allow an input to be an effect, and does not allow an output to be a cause.
For further studies, we define {\em parties\/} and causal relations on parties.
\begin{definition}[Party]
	\rm
	A {\em party\/}~$S=(\hat A,X,\mathcal{E})$ with~$\mathcal{E}:\hat A\times I_S\rightarrow X\times O_S$,
	consists of an input random variable~$\hat A$, an output random variable~$X$, and a map~$\mathcal{E}$ that maps~$\hat A$ together with a physical system that~$S$ receives from the environment to~$X$ and a physical system that is returned to the environment (see Figure~\ref{fig:party}).
	\begin{figure}
		\centering
		\begin{tikzpicture}
			\node[draw,rectangle,minimum width=1cm,minimum height=1cm] (S) {$\mathcal{E}$};
			\draw[->] (S.150) -- ++(-0.2,0) node [left] {$X$};
			\draw[<-] (S.210) -- ++(-0.2,0) node [left] {$\hat A$};
			\draw[->] (S.90) -- ++(0,0.5) node [above] {$O_S$};
			\draw[<-] (S.270) -- ++(0,-0.5) node [below] {$I_S$};
		\end{tikzpicture}
		\caption{Party~$S=(\hat A,X,\mathcal{E})$ with~$\mathcal{E}:\hat A\times I_S\rightarrow X\times O_S$, where~$I_S$ is a physical system obtained from the environment and~$O_S$ is a physical system returned to the environment.}
		\label{fig:party}
	\end{figure}
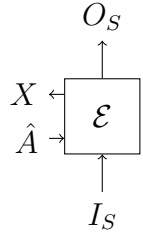
	A party interacts at most once with the environment, where we consider one reception and one transmission of a system as a single interaction.
\end{definition}
Now, we can define causal relations between parties.
\begin{definition}[Causal future and causal past for parties]
	\rm
	Let~$R=(\hat A,X,\mathcal{E})$ and let~\mbox{$S=(\hat B,Y,\mathcal{F})$} be two parties.
	We say~$R$ is in the {\em causal past\/} of~$S$ or~$S$ is in the {\em causal future\/} of~$R$ if and only if~$\hat A$ is correlated with~$Y$ and~$\hat B$ is uncorrelated with~$X$.
	This relation is denoted by~$R\preceq S$.
\end{definition}

\subsection{Predefined vs.~indefinite causal order}
In this work we distinguish between {\em predefined\/} and {\em indefinite\/} causal orders.
In particular, we will show that if one defines causality {\em based on the inputs and outputs\/} (see Definition~\ref{def:causality}), then correlations that are not compatible with a predefined causal order could arise.
\begin{definition}[Compatibility with two-party predefined causal order]
	\label{def:2partypredefined}
	\rm
	Let~$R=(\hat A,X,\mathcal{E})$ and~$S=(\hat B,Y,\mathcal{F})$ be two parties.
	A conditional probability distribution~$P_{X,Y|\hat A,\hat B}$ is called {\em consistent with two-party predefined causal order\/} if and only if the conditional probability distribution can be written as a convex combination of the orderings~\mbox{$R\preceq S$} and~\mbox{$S\preceq R$}, {\it i.e.},
	\begin{eqnarray}
		P_{X,Y|\hat A,\hat B}=pP_{X|\hat A}P_{Y|\hat A,\hat B,X} + (1-p)P_{X|\hat A,\hat B,Y}P_{Y|\hat B}
		\nonumber
		\,,
	\end{eqnarray}
	for some~$0\leq p\leq 1$.
	The first term in the sum represents~$R\preceq S$ and the second term represents~$S\preceq R$.
\end{definition}
This definition determines a polytope of probability distributions~$P_{X,Y|\hat A,\hat B}$ that are compatible with two-party predefined causal order.
All facets for binary random variables were recently enumerated~\cite{araujo14b}.

For three parties or more, a distribution that is compatible with a predefined causal order can be more general than a distribution that can be written as a convex combination of all orderings.
The reason for this is that a party in the causal past of some other parties could in principle influence everything which lies in its causal future --- therefore, it could also influence the {\em causal order\/} of the parties in its causal future.
\begin{definition}[Compatibility with three-party predefined causal order]
	\label{def:predefined}
	\rm
	Consider the three parties~$R=(\hat A,X,\mathcal{E})$,~$S=(\hat B,Y,\mathcal{F})$, and~$T=(\hat C,Z,\mathcal{G})$.
	A conditional probability distribution~$P_{X,Y,Z|\hat A,\hat B,\hat C}$ is called {\em consistent with predefined causal order\/} if and only if the probability distribution can be written as a convex combination of all orderings where one party~$Q$ is in the causal past of the other two, and where the causal order of these two parties is determined by~$Q$.
\end{definition}
A generalized version of Definition~\ref{def:predefined} to any number of parties can be found in~\cite{Oreshkov:2015vs}.
\begin{lemma}
	{\bf\cite{Baumeler:2015wx}}
	{\rm (Necessary condition for predefined causal order)}
	\label{lemma:necessarycondition}
	A {\em necessary\/} condition for predefined causal order is that {\em at least one party is not in the causal future of any party}.
\end{lemma}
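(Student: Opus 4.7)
My plan is to prove the contrapositive: assuming every party is in the causal future of some other party, I will exhibit a contradiction with consistency with predefined causal order. Form the directed graph $G$ on the set of parties, drawing an edge $P\to Q$ whenever $P\preceq Q$. The hypothesis gives every vertex of $G$ in-degree at least one, and finiteness then forces a directed cycle.

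First I would rule out cycles of length two directly from the definition. If $R\preceq S$ and $S\preceq R$ both held, then $R\preceq S$ would give $\hat B$ uncorrelated with $X$ while $S\preceq R$ would give $\hat B$ correlated with $X$, a direct contradiction. So any cycle has length $k\geq 3$.

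For a cycle $P_1\preceq P_2\preceq\cdots\preceq P_k\preceq P_1$, reading off the ``uncorrelated'' clauses of the definition of $\preceq$ yields that $X_i$ is uncorrelated with $\hat A_{i+1}$ for every $i$ modulo $k$. I would then invoke the convex decomposition granted by Definition~\ref{def:predefined}: each extremal term carries a total causal order in which the output of a party can depend only on inputs of parties at or before it in that order. Provided the overall uncorrelation ``$X_i$ uncorrelated with $\hat A_{i+1}$'' implies that in every term of positive weight $P_i$ appears strictly before $P_{i+1}$, the cycle forces the impossible chain $P_1<P_2<\cdots<P_k<P_1$ inside each term.

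The main obstacle is exactly this last implication. A priori, term-wise dependencies of $X_i$ on $\hat A_{i+1}$ could cancel across the mixture and leave the overall marginal independent of $\hat A_{i+1}$. I would handle this by exploiting the freedom in choosing the distribution over $\hat A_{i+1}$: a genuine term-wise dependence expresses itself as an affine function of the input distribution, so cancellation across the full family of admissible inputs forces the dependence to be identically zero term by term, effectively reducing the situation to the deterministic case. An alternative route is to first prove the statement for the extremal points of the predefined causal order polytope, where each term is deterministic and the implication is immediate, and then extend to arbitrary mixtures by noting that the condition ``at least one party is not in the causal future of any party'' is preserved under convex combinations of distributions that share a common such party.
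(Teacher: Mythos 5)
The paper offers no proof of this lemma --- it is imported from~\cite{Baumeler:2015wx} without argument --- so your attempt must stand on its own, and it has a genuine gap at precisely the step you flag and then claim to repair: passing from the overall correlation/uncorrelation clauses of~$\preceq$ to an ordering constraint that holds in \emph{every} positive-weight term of the decomposition of Definition~\ref{def:predefined}. The clause ``$\hat A_i$ correlated with $X_{i+1}$'' only forces $P_i$ to precede $P_{i+1}$ in \emph{some} term, and the different links of your cycle may be realized in different terms. Your proposed repair --- that cancellation across all admissible input distributions forces term-wise independence --- is false: mixing, with equal weights, a term in which $X=\hat B$ with a term in which $X=\hat B\oplus 1$ yields a marginal for $X$ that is independent of $\hat B$ under every input distribution, although each term depends on $\hat B$. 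Your fallback route fails for the same underlying reason: the extremal points of the causal polytope need not share a common ``first'' party, and the property is not preserved when mixing points with different first parties.

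In fact no argument can close the gap, because under the paper's literal definitions the statement fails for mixtures. Consider the uniform mixture of three fixed-order terms: under the order $R,S,T$ set $X=0$, $Y=\hat A$, $Z=\hat B$; under $S,T,R$ set $Y=0$, $Z=\hat B$, $X=\hat C$; under $T,R,S$ set $Z=0$, $X=\hat C$, $Y=\hat A$. Each term respects its order, so the mixture is compatible with predefined causal order; yet overall $X$ depends only on $\hat C$, $Y$ only on $\hat A$, and $Z$ only on $\hat B$, whence $R\preceq S$, $S\preceq T$ and $T\preceq R$ all hold and every party is in the causal future of another. The lemma is only tenable in the per-term sense in which the paper actually uses it (in Theorems~\ref{thm:2party}, \ref{thm:3party} and~\ref{thm:3partydet}): in each term of the causal decomposition the party that comes first produces its output independently of the other parties' inputs, the success probability of the game is bounded term by term, and convexity extends the bound to the mixture. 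A correct write-up should state and prove that per-term version rather than run a graph-cycle argument on the mixed distribution.
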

If a conditional distribution is {\em incompatible\/} with predefined causal order, then we call it {\em indefinite}.

\begin{example}[One-way signaling]
	\rm
	Let~$R=(\hat A,X,\mathcal{E})$ and~$S=(\hat B,Y,\mathcal{F})$ be two parties.
	The probability distribution over binary random variables
	\begin{equation*}
		P_{X,Y|\hat A,\hat B}(x,y,a,b)=\cases{
			1/2&for $x=b\wedge y=0$\\
			1/2&for $x=b\wedge y=1$\\
			0&otherwise
		}
	\end{equation*}
	is {\em compatible\/} with predefined causal order, because it can be written as
	\begin{eqnarray*}
		P_{X,Y|\hat A,\hat B}&=P_{X|\hat A,\hat B,Y}P_{Y|\hat B}
		\,,
	\end{eqnarray*}
	with
	\begin{equation*}
		P_{X|\hat A,\hat B,Y}(x,a,b,y)=\cases{
			1&for~$x=b$\\
			0&otherwise,}
	\end{equation*}
	and
	\begin{equation*}
		P_{Y|\hat B}(y,b)=1/2
		\,.
	\end{equation*}
\end{example}

\begin{example}[Two-way signaling]
	\rm
	Let~$R=(\hat A,X,\mathcal{E})$ and~$S=(\hat B,Y,\mathcal{F})$ be two parties.
	The probability distribution over binary random variables
	\begin{equation*}
		P_{X,Y|\hat A,\hat B}(x,y,a,b)=\cases{
			1&for $x=b\wedge y=a$\\
			0&otherwise
		}
	\end{equation*}
	is {\em incompatible\/} with predefined causal order (has an {\em indefinite\/} causal order), because it {\em cannot\/} be written as described in Definition~\ref{def:2partypredefined}.
\end{example}

\section{Assuming quantum theory locally}
\label{sec:quantum}
In 2012, Oreshkov, Costa, and Brukner~\cite{Oreshkov:2012uh} introduced the process-matrix framework for quantum correlations without predefined causal order.
In that framework, a party~$R=(\hat A,X,\mathcal{E}:\hat A\times I_R\rightarrow X\times O_R)$ receives a quantum state on the Hilbert space~$I_R$ from the environment and returns a quantum state on the Hilbert space~$O_R$ to the environment.
The map~$\mathcal{E}$ is completely positive, because we assume the validity of quantum theory within the laboratories of every party.
Let\footnote{Throughout this work we use bold letters for vectors and matrices.}~$\bm{\mathcal{E}}_{x,a}$ be the corresponding Choi-Jamio{\l}kowski~\cite{Jamioikowski:1972ke,Choi:1975el} map which is an element in~$I_R\otimes O_R$, where~$a$ is the input value and~$x$ is the output value.
Let~$S=(\hat B,Y,\mathcal{F}:\hat B\times I_S\rightarrow Y\times O_S)$ be another party with Choi-Jamio{\l}kowski map~$\bm{\mathcal{F}}_{y,b}$.
The most general probability distribution~$P_{X,Y|\hat A,\hat B}$ that is linear in the local operations is
\begin{eqnarray*}
	P_{X,Y|\hat A,\hat B}(x,y,a,b)=\mathrm{Tr}\left((\bm{\mathcal{E}}_{x,a}\otimes \bm{\mathcal{F}}_{y,b})W\right)
	\,,
\end{eqnarray*}
where~$W$ is a matrix living in the Hilbert space~$I_R\otimes O_R\otimes I_S\otimes O_S$.
The matrix~$W$ is called {\em process matrix}.

Since~$P_{X,Y|\hat A,\hat B}$ is designed to be a probability distribution, and since both parties can arbitrarily choose their local operations~$\mathcal{E}$ and~$\mathcal{F}$, the following two conditions must be satisfied:
\begin{eqnarray}
	\forall \mathcal{E},\mathcal{F},x,y,a,b:&\,P_{X,Y|\hat A,\hat B}(x,y,a,b)\geq 0\,,\label{eq:nonneg}\\
	\forall \mathcal{E},\mathcal{F},a,b:&\,\sum_{x,y}P_{X,Y|\hat A,\hat B}(x,y,a,b)=1\,.\label{eq:totprob}
\end{eqnarray}
\begin{definition}[Logically consistent process matrix]
	\rm
	We call a process matrix~$W$ {\em logically consistent\/} if and only if~$W$ satisfies the conditions~(\ref{eq:nonneg}) and~(\ref{eq:totprob}).
\end{definition}

Condition~(\ref{eq:nonneg}) implies that~$W$ must be a completely positive trace-preserving map from~$O_R\otimes O_S$ to~$I_R\otimes I_S$.
Therefore, we can interpret~$W$ as a quantum channel (see Figure~\ref{fig:W}).
\begin{figure}
	\centering
        \begin{tikzpicture}
		\node[draw,rectangle,minimum height=1cm,minimum width=1cm] (R) {$\mathcal{E}$};
                \draw[->] (R.150) -- ++(-0.2,0) node [left] {$X$};
                \draw[<-] (R.210) -- ++(-0.2,0) node [left] {$\hat A$};
                \node[draw,rectangle,minimum width=1cm,minimum height=1cm,right=2cm of R] (W) {$W$};
		\node[draw,rectangle,minimum height=1cm,minimum width=1cm,right=2cm of W] (S) {$\mathcal{F}$};
                \draw[->] (S.30) -- ++(0.2,0) node [right] {$Y$};
                \draw[<-] (S.330) -- ++(0.2,0) node [right] {$\hat B$};
                \draw[->] (R) to[out=80,in=100] (W);
                \draw[<-] (R) to[out=280,in=260] (W);
                \draw[->] (S) to[out=100,in=80] (W);
                \draw[<-] (S) to[out=260,in=280] (W);
                \node (SO) at ([shift={(-0.3cm,0.8cm)}]R) {$O_R$};
                \node (SI) at ([shift={(-0.3cm,-0.8cm)}]R) {$I_R$};
                \node (TO) at ([shift={(0.3cm,0.8cm)}]S) {$O_S$};
                \node (TI) at ([shift={(0.3cm,-0.8cm)}]S) {$I_S$};
        \end{tikzpicture}
	\caption{Two-party process matrix~$W$ as a channel from the Hilbert space~$O_R\otimes O_S$ to~$I_R\otimes I_S$.}
	\label{fig:W}
\end{figure}
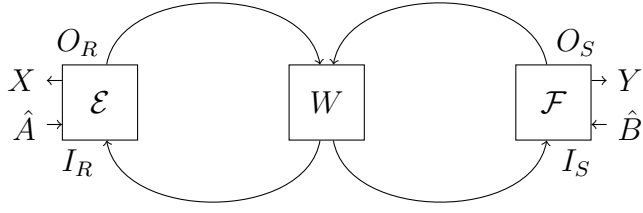
If we assumed a global causal structure (what we do not), then~$W$ would be a {\em back-in-time\/} channel.
Rather, it can be interpreted as the {\em environment which lies outside space-time\/} --- the causal structure is designed by~$W$ and emerges from the correlations, as will become clear later.
First, we can understand~$W$ as a generalized notion of a state and a channel: Whereas it describes a {\em quantum state\/} in Example~\ref{ex:state}, it models a {\em quantum channel\/} in Example~\ref{ex:channel}.
\begin{example}[Representation of a quantum state]
	\rm
	\label{ex:state}
	This logically consistent process matrix
	\begin{eqnarray*}
		W_\mathrm{state}=\rho_{I_S,I_R}\otimes\mathds{1}_{O_S,O_R}
	\end{eqnarray*}
	describes the {\em quantum state\/}~$\rho$ that is sent to both parties.
\end{example}
\begin{example}[Representation of a quantum channel]
	\rm
	\label{ex:channel}
	This logically consistent process matrix
	\begin{eqnarray*}
		W_\mathrm{channel}=\mathds{1}_{I_R}\otimes|\Psi\rangle\langle\Psi|_{O_R,I_S}\otimes\mathds{1}_{O_S}
		\,,
	\end{eqnarray*}
	with~$|\Psi\rangle=(|0,0\rangle+|1,1\rangle)/\sqrt{2}$, describes a {\em qubit channel\/} from party~$R$ to party~$S$.
\end{example}

Besides quantum states and quantum channels, a logically consistent process matrix can also describe {\em superpositions of quantum channels}.
For instance, we could have a channel from~$R$ to~$S$ to~$T$ superposed with a channel from~$S$ to~$R$ to~$T$, where~$T=(\hat C,Z,\mathcal{G}:\hat C\times I_T\times I_{T'}\rightarrow Z)$ is a party which {\em does not\/} return a system to the environment, and receives two systems from the environment (the target on~$I_T$ and the control on~$I_{T'}$) (see Example~\ref{ex:superposition}).
\begin{example}[Superposition of channels]
	\rm
	\label{ex:superposition}
	This logically consistent process matrix
	\begin{eqnarray*}
		W_\mathrm{superposedchannel}=|w\rangle\langle w|\,,
	\end{eqnarray*}
	with
	\begin{eqnarray*}
		|w\rangle = \frac{
		|0,\Psi,\Psi,0\rangle_{I_R,O_R,I_S,O_S,I_T,I_{T'}}
		+
		|0,\Psi,\Psi,1\rangle_{I_S,O_S,I_R,O_R,I_T,I_{T'}}
	}{\sqrt 2}
	\end{eqnarray*}
	describes a superposition of a channel from~$R$ to~$S$ to~$T$ and a channel from~$S$ to~$R$ to~$T$.
\end{example}
The process matrix from Example~\ref{ex:superposition} can be used to solve certain tasks more efficiently.
Suppose you are given two black boxes~$\mathrm{B}$ and~$\mathrm{C}$ that act on qubits, and you are guaranteed that~$\mathrm{B}$ and~$\mathrm{C}$ either {\em commute\/} or {\em anti-commute}.
In the standard circuit model, you would need to query each box twice in order to determine the (anti)commutativity.
Whereas by using the process matrix~$W_\mathrm{superposedchannel}$, a single query suffices~\cite{Chiribella:2012jg,Colnaghi:2012dv,Araujo:2014kf,Araujo:2015ky,Procopio:2015iw}.

\subsection{Non-causal process matrices}
Whilst the above examples of process matrices are {\em compatible with predefined causal order}, some logically consistent process matrices lead to correlations that cannot be obtained in a world with a predefined causal ordering of the parties --- such process matrices are called {\em non-causal}.
\begin{definition}[Causal and non-causal process matrices]
	\rm
	A process matrix~$W$ is called {\em causal\/} if and only if for any choice of operations of the parties the resulting probability distribution {\em is compatible with predefined causal order\/} (according to Definition~\ref{def:predefined}).
	Otherwise, it is called {\em non-causal}.
\end{definition}

To show that a process matrix is non-causal, we define a game, give an upper bound on the winning probability for this game under the assumption of a predefined causal order, and show that this bound can be violated in the process-matrix framework.
\begin{game}
	{\bf\cite{Oreshkov:2012uh}}
	{\rm (Two-party non-causal game)}
	\rm
	\label{game:2party}
	Let~$R=(\hat A,X,\mathcal{E})$ and~$S=((\hat B,\hat B'),Y,\mathcal{F})$ be two parties that aim at maximizing
	\begin{eqnarray*}
		p_\mathrm{succ}=\frac{1}{2}\left(\Pr\left(X=\hat B\,|\,b'=0\right)+\Pr\left(Y=\hat A\,|\,b'=1\right)\right)
		\,,
	\end{eqnarray*}
	where all random variables are binary and where all inputs are uniformly distributed.
	Informally, if~$b'=0$, then party~$S$ is asked to send her input to~$R$, otherwise, party~$R$ is asked to send her input to~$S$.
\end{game}
We give an upper bound on the success probability of Game~\ref{game:2party} under the assumption of a predefined causal order.
\begin{theorem}
	{\bf\cite{Oreshkov:2012uh}}
	{\rm (Upper bound on success probability of Game~\ref{game:2party})}
	\label{thm:2party}
	Under the assumption of a predefined causal order, Game~\ref{game:2party} can at best be won with probability~$3/4$.
\end{theorem}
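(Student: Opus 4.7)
The plan is to use Definition~\ref{def:2partypredefined} to decompose any predefined-causal-order distribution $P_{X,Y|\hat A,\hat B,\hat B'}$ as a convex combination of an $R\preceq S$ branch and an $S\preceq R$ branch, and then to bound $p_\mathrm{succ}$ separately on each branch. Since $p_\mathrm{succ}$ is linear in the distribution, the overall bound follows from the two branch bounds. Note that treating $(\hat B,\hat B')$ as the single composite input of~$S$ is allowed because Definition~\ref{def:2partypredefined} is agnostic about the internal structure of each party's input; this is the only mild bookkeeping needed.

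First I would analyze the branch $R\preceq S$. Here, by definition, $X$ is conditionally independent of $(\hat B,\hat B')$ given $\hat A$. Since all inputs are independent and uniform, for any choice of $\mathcal{E},\mathcal{F}$ one has
\begin{eqnarray*}
\Pr(X=\hat B\mid b'=0) = \sum_{a,x} P_{\hat A}(a)\,P_{X|\hat A}(x\mid a)\,P_{\hat B}(x) = \tfrac{1}{2}\,,
\end{eqnarray*}
because $\hat B$ is uniform and independent of~$X$. The other term can be at most~$1$ (attained trivially by the protocol in which $S$ outputs $Y$ equal to the message received from $R$, who forwards~$\hat A$). Hence on this branch $p_\mathrm{succ}\le \tfrac{1}{2}(\tfrac{1}{2}+1)=\tfrac{3}{4}$.

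Next I would treat the symmetric branch $S\preceq R$ identically: here $Y$ is independent of~$\hat A$, so $\Pr(Y=\hat A\mid b'=1)=1/2$ while $\Pr(X=\hat B\mid b'=0)\le 1$, giving $p_\mathrm{succ}\le 3/4$ again. Finally, writing a general predefined-causal distribution as $p\cdot P^{R\preceq S}+(1-p)\cdot P^{S\preceq R}$, linearity of $p_\mathrm{succ}$ yields $p_\mathrm{succ}\le p\cdot \tfrac{3}{4}+(1-p)\cdot \tfrac{3}{4}=\tfrac{3}{4}$.

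There is no substantial obstacle; the only subtle point worth stating explicitly is the independence argument that forces the ``guessing'' term to equal exactly $1/2$ rather than merely being bounded by it. This relies on $\hat B$ being uniform and independent of $\hat A$, which is part of the free-randomness assumption built into the game. I would also briefly remark that $3/4$ is tight, achieved, e.g., by the deterministic strategy in which $R$ always forwards~$\hat A$ (``pretends to be first'') and $S$ always outputs $Y=\hat A$ obtained from~$R$ — this wins whenever $b'=1$ and wins with probability~$1/2$ when $b'=0$.
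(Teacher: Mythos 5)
Your proof is correct and follows essentially the same route as the paper's: identify the causally-first party, observe that its guessing term is forced to $1/2$ by independence and uniformity of the inputs, and bound the other term by $1$. The only difference is cosmetic --- the paper invokes Lemma~\ref{lemma:necessarycondition} and a ``without loss of generality,'' whereas you make the convex decomposition of Definition~\ref{def:2partypredefined} and the linearity of $p_\mathrm{succ}$ explicit, which is if anything slightly more careful about mixtures of the two orderings.
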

\begin{proof}
	By Lemma~\ref{lemma:necessarycondition}, at least one party is not in the causal future of any party.
	Without loss of generality, let~$R$ be this party.
	Then,~$R$ can only make a random guess, which means~$\Pr(X=\hat B\,|\,b'=0)=1/2$.
	Therefore, we obtain the upper bound
	\begin{eqnarray*}
		p_\mathrm{succ}=\frac{1}{2}\left(\frac{1}{2}+\Pr\left(Y=\hat A\,|\,b'=1\right)\right)\leq\frac{3}{4}
		\,.
	\end{eqnarray*}
\end{proof}
Then again, this upper bound given by Theorem~\ref{thm:2party} can be violated in the process-matrix framework.
\begin{theorem}
	{\bf \cite{Oreshkov:2012uh}}
	\label{thm:2partyviolation}
	Game~\ref{game:2party} can be won with probability~$(2+\sqrt 2)/4$ if we drop the assumption of a global causal order.
\end{theorem}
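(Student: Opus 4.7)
The plan is to exhibit an explicit two-qubit process matrix $W$ together with local operations achieving $p_\mathrm{succ}=(2+\sqrt 2)/4$, paralleling the structure of Tsirelson's bound for CHSH. I would take all four Hilbert spaces $I_R,O_R,I_S,O_S$ to be of dimension two and choose the ``OCB'' process matrix
\begin{eqnarray*}
W=\frac{1}{4}\left(\mathds{1}+\frac{1}{\sqrt 2}\left(\sigma_z^{O_R}\sigma_z^{I_S}+\sigma_z^{I_R}\sigma_x^{I_S}\sigma_z^{O_S}\right)\right)\,,
\end{eqnarray*}
with identities on the remaining tensor factors suppressed. Two conditions need to be checked to certify logical consistency: positivity $W\succeq 0$, which holds precisely because the coefficient $1/\sqrt 2$ saturates the associated Tsirelson-type bound, and the partial-trace normalisation identities that guarantee~(\ref{eq:nonneg}) and~(\ref{eq:totprob}) for every CP choice of local operations. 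Both can be verified directly from the Pauli algebra on the four two-dimensional factors.

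Next, I would specify the local strategies in Choi--Jamio\l{}kowski form. Party $R$ deterministically measures the incoming qubit in the $Z$ basis to obtain $X$ and prepares $|\hat A\rangle$ on $O_R$. Party $S$ branches on $\hat B'$: if $\hat B'=0$ she mimics $R$'s strategy, measuring in the $Z$ basis to read off $Y$ and preparing $|\hat B\rangle$ on $O_S$; if $\hat B'=1$ she instead measures her input in the rotated basis $\{\cos(\pi/8)|0\rangle+\sin(\pi/8)|1\rangle,\sin(\pi/8)|0\rangle-\cos(\pi/8)|1\rangle\}$ to read off $Y$, while preparing a fixed reference state on $O_S$. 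Substituting the corresponding $\bm{\mathcal E}_{x,a}$ and $\bm{\mathcal F}_{y,b,b'}$ into the trace formula reduces each conditional probability to an expectation value of at most one Pauli correlator on $W$, and both $\Pr(X=\hat B\,|\,\hat B'=0)$ and $\Pr(Y=\hat A\,|\,\hat B'=1)$ evaluate to $\cos^2(\pi/8)=(2+\sqrt 2)/4$. Averaging over the uniform distribution of $\hat B'$ yields the claimed bound.

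The main obstacle is not the probability calculation, which is routine once the Choi--Jamio\l{}kowski operators are written down, but the verification that $W$ is genuinely a logically consistent process matrix: positivity is a spectral computation on a $16\times 16$ matrix that factors through a handful of Pauli correlators, and the normalisation conditions amount to showing that the ``non-trivial'' Pauli terms in $W$ vanish after tracing out $O_R,O_S$ against arbitrary CP maps. A clean way to discharge these conditions is to use the characterisation of valid bipartite process matrices from~\cite{Oreshkov:2012uh} and check the required vanishing-marginal conditions term-by-term for this $W$. Once logical consistency is granted, the rest of the argument is the explicit Pauli expectation calculation outlined above, and the resulting value $(2+\sqrt 2)/4>3/4$ contradicts the bound of Theorem~\ref{thm:2party}, establishing the claim.
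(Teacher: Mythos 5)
Your choice of $W$ is the same OCB process matrix the paper uses, and your outline for certifying logical consistency is fine, but the local operations you propose do not achieve $(2+\sqrt 2)/4$; they give only $p_{\mathrm{succ}}=\frac{1}{2}\left(\frac{1}{2}+\frac{3}{4}\right)=\frac{5}{8}$, which does not even violate the causal bound of Theorem~\ref{thm:2party}. Two things go wrong. First, for $b'=0$ (the round where $S$ must signal $\hat B$ to $R$) you have $S$ measure $\sigma_z$ on $I_S$ and prepare $|\hat B\rangle$ on $O_S$. The only term of $W$ that can carry information to $R$ is $\frac{1}{\sqrt 2}(\sigma_z)_{I_R}(\sigma_x)_{I_S}(\sigma_z)_{O_S}$, and your $\sigma_z$-measurement effects on $I_S$ are traceless against $(\sigma_x)_{I_S}$, so this term contributes nothing and $\Pr(X=\hat B\,|\,b'=0)=1/2$ exactly. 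The ingredient you are missing is the ``decode-and-re-encode'' step of the paper's $\bm{\mathcal{F}}_{y,b,b'=0}$: $S$ must measure $\sigma_x$ on $I_S$, obtain $y$, and prepare $\frac{1}{2}\left(\mathds{1}+(-1)^{b+y}\sigma_z\right)$ on $O_S$, so that the product $(-1)^{y}(-1)^{b+y}=(-1)^b$ activates the three-body correlator and correlates $R$'s $\sigma_z$ outcome with $b$ at strength $1/\sqrt 2$, i.e.\ $\Pr(X=\hat B\,|\,b'=0)=(2+\sqrt 2)/4$.

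Second, for $b'=1$ you rotate $S$'s measurement basis by $\pi/8$. That is the right move in a CHSH game played on a maximally entangled state, but here the Tsirelson factor $1/\sqrt 2$ is already built into $W$: the relevant term is $\frac{1}{\sqrt 2}(\sigma_z)_{O_R}(\sigma_z)_{I_S}$, so the optimal receiving measurement is $\sigma_z$ itself, which reads off the correlation at strength $1/\sqrt 2$ and yields $\Pr(Y=\hat A\,|\,b'=1)=(2+\sqrt 2)/4$. Measuring $(\sigma_z+\sigma_x)/\sqrt 2$ instead degrades the correlation by a further factor $1/\sqrt 2$, leaving only $\Pr(Y=\hat A\,|\,b'=1)=3/4$. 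So the overall architecture of your argument --- exhibit the OCB $W$, check positivity via the anticommutativity of the two Pauli terms, plug Choi--Jamio{\l}kowski operators into the trace formula --- matches the paper's, but the concrete strategies must be replaced by the paper's $\bm{\mathcal{E}}_{x,a}$ and $\bm{\mathcal{F}}_{y,b,b'}$ before the computation produces the claimed value.
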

The logically consistent process matrix
\begin{eqnarray*}
	W=\frac{1}{4}\left( \mathds{1}+\frac{\mathds{1}_{I_R}(\sigma_z)_{O_R}(\sigma_z)_{I_S}\mathds{1}_{O_S}
	+(\sigma_z)_{I_R}\mathds{1}_{O_R}(\sigma_x)_{I_S}(\sigma_z)_{O_S} }{\sqrt 2} \right)
	\,,
\end{eqnarray*}
where we omit the~$\otimes$ symbols for better presentation,
and the local operations
\begin{eqnarray*}
	\bm{\mathcal{E}}_{x,a}& = &\frac{1}{4}\left( \mathds{1}+(-1)^x\sigma_z \right)_{I_R}\otimes\left( \mathds{1}+(-1)^a\sigma_z \right)_{O_R}\,,\\
	\bm{\mathcal{F}}_{y,b,b'=0}& = &\frac{1}{4}\left( \mathds{1}+(-1)^y\sigma_x \right)_{I_S}\otimes\left( \mathds{1}+(-1)^{b+y}\sigma_z \right)_{O_S}\,,\\
	\bm{\mathcal{F}}_{y,b,b'=1}& = &\frac{1}{2}\left( \mathds{1}+(-1)^z\sigma_x \right)_{I_S}\otimes\rho_{O_S}\,,
\end{eqnarray*}
for an arbitrary~$\rho$, can be used to violate the bound given by Theorem~\ref{thm:2party} up to~$(2+\sqrt 2)/4$~\cite{Oreshkov:2012uh}.
Brukner~\cite{Brukner:2014vo} proved under certain assumptions that within the process-matrix framework, this violation cannot be exceeded.

This raised the question whether {\em non-causal\/} logically consistent process matrices in the classical realm also exist --- which was answered negatively in the two-party case~\cite{Oreshkov:2012uh,Costa:2013vc}.  
However, for three parties or more, this is not true anymore; this is shown in Section~\ref{sec:classical}.

Logically consistent process matrices can be understood as a new resource for quantum operations.
Oddly enough, these resources cannot be composed: If we take two logically consistent process matrices~$W_1$ and~$W_2$, where the former is a quantum channel from~$R$ to~$S$ and the latter is a quantum channel from~$S$ to~$R$, then~$W_1\otimes W_2$ is {\em not\/} logically consistent;~$R$ could in principle alter the state on~$I_R$ she receives from the environment, leading to a causal paradox.
The impossibility of composing process matrices within the model reflects the fact that a process matrix is supposed to describe the environment as a {\em whole}.

\section{Assuming classical probability theory locally}
\label{sec:classical}
The {\em classical\/} analogue of the process-matrix framework was recently developed~\cite{Baumeler:2015wx}, and has been found to give rise to non-causal correlations for three parties or more.
It is the classical analogue in the sense that, instead of assuming the validity of quantum theory locally, classical probability theory is assumed to hold locally.
Let~\mbox{$R=(\hat A,X,\mathcal{E}:\hat A\times I_R\rightarrow X\times O_R)$} be a party.
Here, the spaces~$I_R$ and~$O_R$ are not Hilbert spaces (as in the process-matrix framework), but describe random variables.
Therefore, the local operation~$\mathcal{E}$ of a party is a conditional probability distribution~$P_{X,O_R|\hat A,I_R}$; locally we assume the validity of probability theory as opposed to quantum theory.
Let~$S=(\hat B,Y,\mathcal{F})$ with~$\mathcal{F}:\hat B\times I_S\rightarrow Y\times O_S$ and~$T=(\hat C,Z,\mathcal{G})$ with~$\mathcal{G}:\hat C\times I_T\rightarrow Z\times O_T$ be two further parties.
The most general probability distribution~$P_{X,Y,Z,I_R,I_S,I_T,O_R,O_S,O_T|\hat A,\hat B,\hat C}$ that is linear in the local operations is
\begin{eqnarray}
	\mathcal{E}(x,o_R,a,i_R)
	\mathcal{F}(y,o_S,b,i_S)
	\mathcal{G}(z,o_T,c,i_T)
	E(i_R,i_S,i_T,o_R,o_S,o_T)
	\label{eq:classical}
	\,,
\end{eqnarray}
where~$E$ is called {\em classical process}.
Since all three parties can arbitrarily choose their local operations~$\mathcal{E}$,~$\mathcal{F}$, and~$\mathcal{G}$, the following conditions must be satisfied (we use~$\bm i$ as shorthand expression for~$(i_R,i_S,i_T)$, likewise for~$\bm o$,~$\bm I$, and~$\bm O$):
\begin{eqnarray}
	\forall\mathcal{E},\mathcal{F},\mathcal{G},x,y,z,\bm i,\bm o,a,b,c:\,P_{X,Y,Z,\bm I,\bm O|\hat A,\hat B\hat C}(x,y,z,\bm i,\bm o,a,b,c)\geq 0\,,\label{eq:clnonneg}\\
	\forall\mathcal{E},\mathcal{F},\mathcal{G},a,b,c:\,\sum_{x,y,z,\bm i,\bm o}P_{X,Y,Z,\bm I,\bm O|\hat A,\hat B\hat C}(x,y,z,\bm i,\bm o,a,b,c)=1\,.\label{eq:cltotprob}
\end{eqnarray}
\begin{definition}[Logically consistent clasical process]
	\rm
	We call a classical process~$E$ {\em logically consistent\/} if and only if~$E$ satisfies the conditions~(\ref{eq:clnonneg}) and~(\ref{eq:cltotprob}).
\end{definition}
Condition~(\ref{eq:clnonneg}) implies that the classical process~$E$ is a conditional probability distribution~$P_{I_R,I_S,I_T|O_R,O_S,O_T}$.
Therefore,~$E$ can be interpreted in the same way as process matrices: it maps the systems~$O_R$,~$O_S$, and~$O_T$ to~$I_R$,~$I_S$, and~$I_T$ (see Figure~\ref{fig:E}).
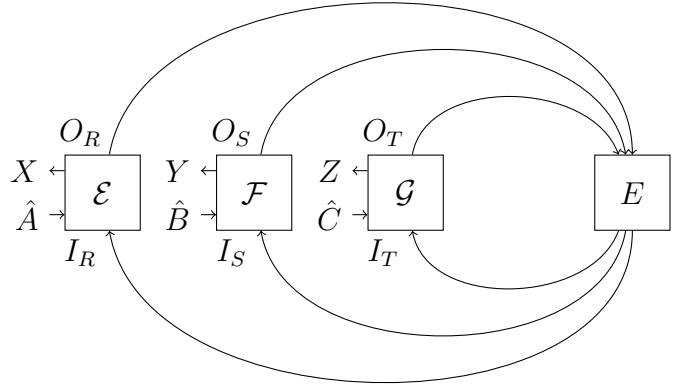
\begin{figure}
	\centering
        \begin{tikzpicture}
		\node[draw,rectangle,minimum height=1cm,minimum width=1cm] (R) {$\mathcal{E}$};
                \draw[->] (R.150) -- ++(-0.2,0) node [left] {$X$};
                \draw[<-] (R.210) -- ++(-0.2,0) node [left] {$\hat A$};
		\node[draw,rectangle,minimum height=1cm,minimum width=1cm,right=1cm of R] (S) {$\mathcal{F}$};
                \draw[->] (S.150) -- ++(-0.2,0) node [left] {$Y$};
                \draw[<-] (S.210) -- ++(-0.2,0) node [left] {$\hat B$};
		\node[draw,rectangle,minimum height=1cm,minimum width=1cm,right=1cm of S] (T) {$\mathcal{G}$};
                \draw[->] (T.150) -- ++(-0.2,0) node [left] {$Z$};
                \draw[<-] (T.210) -- ++(-0.2,0) node [left] {$\hat C$};
                \node[draw,rectangle,minimum width=1cm,minimum height=1cm,right=2cm of T] (E) {$E$};
                \draw[->] (R) to[out=80,in=90] (E);
                \draw[<-] (R) to[out=280,in=270] (E);
                \draw[->] (S) to[out=80,in=100] (E);
                \draw[<-] (S) to[out=280,in=260] (E);
                \draw[->] (T) to[out=80,in=110] (E);
                \draw[<-] (T) to[out=280,in=250] (E);
                \node (RO) at ([shift={(-0.3cm,0.8cm)}]R) {$O_R$};
                \node (RI) at ([shift={(-0.3cm,-0.8cm)}]R) {$I_R$};
                \node (SO) at ([shift={(-0.3cm,0.8cm)}]S) {$O_S$};
                \node (SI) at ([shift={(-0.3cm,-0.8cm)}]S) {$I_S$};
                \node (TO) at ([shift={(-0.3cm,0.8cm)}]T) {$O_T$};
                \node (TI) at ([shift={(-0.3cm,-0.8cm)}]T) {$I_T$};
        \end{tikzpicture}
	\caption{Three-party classical process~$E$ as a channel from~$(O_R,O_S,O_T)$ to~$(I_R,I_S,I_T)$.}
	\label{fig:E}
\end{figure}

We rewrite the conditions~(\ref{eq:clnonneg}) and~(\ref{eq:cltotprob}) by using stochastic matrices.
This helps to check whether a classical process is logically consistent or not.
We write~$\bm{E}$ to denote the stochastic matrix that models the classical process~$E$.
Then, the non-negativity condition~(\ref{eq:clnonneg}) becomes
\begin{eqnarray}
	\forall m,n:\,\bm{E}_{m,n}\geq 0
	\label{eq:mnonneg}
	\,,
\end{eqnarray}
where~$\bm{E}_{m,n}$ are the matrix elements.
For simplicity, we fix all inputs~$(\hat A,\hat B,\hat C)$ to~$(a,b,c)$ and consider the operations~\mbox{$\mathcal{E'}:I_R\rightarrow O_R$} only, and likewise for the other parties.
Let~$\bm{\mathcal{E'}}$ be the corresponding stochastic matrix.
The probability that the operation~$\mathcal{E'}$ produces~$o_R$ conditioned on~$i_R$~is
\begin{eqnarray*}
	P_{O_R|I_R}(o_R,i_R)=\bm{o_R}^T \bm{\mathcal{E'}}\bm{i_R}
	\,,
\end{eqnarray*}
where~$\bm{o_R}$ is the stochastic vector\label{p:sv} that models~$o_R$, {\em e.g.}, for a binary random variable~$O_R$ the value~$o_R=0$ is modeled by
\begin{eqnarray*}
	\bm{o_R}=
	\left(
	\begin{array}{c}
		1\\0
	\end{array}
	\right)
	\,.
\end{eqnarray*}
Having this, we rewrite condition~(\ref{eq:cltotprob}) as
\begin{eqnarray*}
	\forall \mathcal{E},\mathcal{F},\mathcal{G}:\,&\sum_{\bm i,\bm o}&
	\left(\bm{i_R}\otimes \bm{i_S}\otimes \bm{i_T}\right)^T
	\bm{E}
	\left(\bm{o_R}\otimes \bm{o_S}\otimes \bm{o_T}\right)\\
	&\times&\left(\bm{o_R}^T \bm{\mathcal{E'}} \bm{i_R}\right)
	\left(\bm{o_S}^T \bm{\mathcal{F'}} \bm{i_S}\right)
	\left(\bm{o_T}^T \bm{\mathcal{G'}} \bm{i_T}\right)
	\\
	&=&\mathrm{Tr}\left( \bm{E}\left(\bm{\mathcal{E'}}\otimes \bm{\mathcal{F'}}\otimes \bm{\mathcal{G'}}\right) \right)\\
	&=&1
	\,.
\end{eqnarray*}
Indeed, since any local operation~$\mathcal{E'}$ can be written as a convex combination of deterministic operations, the total-probability condition
\begin{eqnarray}
	\forall \mathcal{E'},\mathcal{F'},\mathcal{G'}\in\mathcal{D}:\,\mathrm{Tr}\left( \bm{E}\left(\bm{\mathcal{E'}}\otimes \bm{\mathcal{F'}}\otimes \bm{\mathcal{G'}}\right) \right)=1
	\label{eq:trcond}
	\,,
\end{eqnarray}
where~$\mathcal{D}$ is the set of all {\em deterministic\/} local operations, is sufficient.

\subsection{Non-causal classical processes}
In analogy to the process-matrix framework, we can define {\em non-causal\/} classical processes.
\begin{definition}[Non-causal classical processes]
	\rm
	A classical process~$E$ is called {\em non-causal\/} if and only if there exists a choice of operations of the parties such that the resulting probability distribution does {\em not\/} satisfy the necessary condition for predefined causal order (see Lemma~\ref{lemma:necessarycondition}).
\end{definition}

It is known that for two parties, non-causal logically consistent classical processes do not exist~\mbox{\cite{Oreshkov:2012uh,Costa:2013vc}}.
For three parties or more, however, such processes {\em do\/} exist~\cite{Baumeler:2014cw,Baumeler:2015wx}.
We describe two non-causal logically consistent classical processes.

\begin{game}
	{\bf\cite{Baumeler:2014cw}}
	{\rm (Three-party non-causal)}
	\rm
	\label{game:3party}
	Let~$R=((\hat A,\hat M),X,\mathcal{E})$,~$S=((\hat B,\hat M),Y,\mathcal{F})$, and~$T=((\hat C,\hat M),Z,\mathcal{G})$ be three parties that aim at maximizing
	\begin{eqnarray*}
		p_\mathrm{succ2}&=&
		\frac{1}{3}\left(
			\Pr\left(X=\hat B\oplus \hat C\,|\,m=0\right)\right.\\
			&+&\left.\Pr\left(Y=\hat A\oplus \hat C\,|\,m=1\right)+
			\Pr\left(Z=\hat A\oplus \hat B\,|\,m=2\right)\right)
			\,,
	\end{eqnarray*}
	where all random variables but~$\hat M$ are binary, and where~$\hat M$ is a {\em shared\/} ternary random variable.
	All input random variables are uniformly distributed.
	Informally, if the shared random variable~$\hat M$ takes value~$0$, then party~$R$ has to guess the parity of the inputs of~$S$ and~$T$, and likewise for the alternative values the random variable~$\hat M$ can take.
\end{game}
We give an upper bound on the success probability of Game~\ref{game:3party} under the assumption of a predefined causal order.
\begin{theorem}
	{\bf\cite{Baumeler:2014cw}}
	\label{thm:3party}
	In a predefined causal order, Game~\ref{game:3party} can at best be won with probability~$5/6$.
\end{theorem}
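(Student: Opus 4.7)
The plan is to mirror the two-party argument of Theorem~\ref{thm:2party}: apply Lemma~\ref{lemma:necessarycondition} to reduce to a single ``first'' party, observe that this party is forced to random-guess in the round that targets it, and then average. First I would invoke Lemma~\ref{lemma:necessarycondition}: any distribution compatible with a predefined causal order is a convex mixture of distributions in each of which some specific party~$Q$ is not in the causal future of any other party. By linearity of~$p_\mathrm{succ2}$ in the probability distribution, it is enough to upper-bound the success probability under the assumption that a fixed party, which (by symmetry of the game under relabelling) we may take to be~$R$, plays this distinguished role.

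Under this assumption,~$R$'s output~$X$ is a function of~$R$'s own inputs~$(\hat A,\hat M)$ and of local randomness only; it is in particular independent of~$\hat B$ and~$\hat C$ conditioned on~$\hat M = 0$. Since~$\hat B$ and~$\hat C$ are uniform and independent of~$(\hat A,\hat M)$, the parity~$\hat B\oplus\hat C$ is a uniform bit independent of~$X$, so
\[
\Pr\bigl(X=\hat B\oplus\hat C\,\big|\,m=0\bigr)=\tfrac{1}{2}.
\]
The other two terms are trivially bounded by~$1$, giving
\[
p_\mathrm{succ2}\;\leq\;\tfrac{1}{3}\!\left(\tfrac{1}{2}+1+1\right)=\tfrac{5}{6}.
\]
By the symmetry of Game~\ref{game:3party} under permutations of the three parties, the same bound applies if~$S$ or~$T$ is the distinguished past party. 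A convex combination of quantities, each at most~$5/6$, is again at most~$5/6$, which finishes the argument.

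I do not anticipate a serious obstacle. The one point that has to be stated carefully is the independence of~$X$ from~$(\hat B,\hat C)$ given~$m=0$: it rests precisely on the fact that, by Lemma~\ref{lemma:necessarycondition},~$R$ receives no system from the environment that could carry information originating at~$S$ or~$T$, so any shared randomness between~$R$'s operation and~$(\hat B,\hat C)$ is ruled out. The bound is moreover tight and therefore cannot be improved: a protocol that fixes~$R$ first and lets~$R$ choose one of the two orderings~$R\preceq S\preceq T$ or~$R\preceq T\preceq S$ according to~$\hat M$ achieves~$5/6$ by having the last party output the required parity perfectly in the rounds~$m\in\{1,2\}$.
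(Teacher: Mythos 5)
Your proposal is correct and follows essentially the same route as the paper's own proof: invoke Lemma~\ref{lemma:necessarycondition}, take~$R$ without loss of generality as the party not in any other's causal future, conclude~$\Pr(X=\hat B\oplus\hat C\,|\,m=0)=1/2$, and bound the remaining two terms by~$1$ to get~$5/6$. The extra care you take with the convex-decomposition/linearity step and the symmetry justification for the WLOG is a welcome tightening of what the paper leaves implicit, but it is the same argument.
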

\begin{proof}
	A necessary condition for predefined causal order is that at least one party is not in the causal future of any party (see Lemma~\ref{lemma:necessarycondition}).
	Without loss of generality, let~$R$ be this party.
	Whenever the shared trit~$\hat M$ takes the value~$0$, party~$R$ can give a random guess only.
	Therefore,~$\Pr(X=\hat B\oplus \hat C\,|\,m=0)=1/2$.
	This gives the upper bound
	\begin{eqnarray*}
		p_\mathrm{succ2}&=\frac{1}{3}\left( \frac{1}{2}+\Pr\left(Y=\hat A\oplus \hat C\,|\,m=1\right)+\Pr\left(Z=\hat A\oplus \hat B\,|\,m=2\right) \right)\\
		&\leq \frac{5}{6}
		\,.
	\end{eqnarray*}
\end{proof}
\begin{theorem}
	{\bf\cite{Baumeler:2014cw}}
	\label{thm:3partyperfect}
	Game~\ref{game:3party} can be won perfectly if we drop the assumption of a global causal order.
\end{theorem}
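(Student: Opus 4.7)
The plan is to exhibit a three-party logically consistent classical process~$E$ together with local strategies, parameterized by the shared trit~$m$, that achieve~$p_\mathrm{succ2}=1$. The strategies have a ``rotational'' structure: for~$m=0$ party~$R$ plays passively, placing a fixed marker on~$O_R$ and outputting~$X$ as a function of~$i_R$, while~$S$ and~$T$ transmit their binary inputs ($o_S:=b$,~$o_T:=c$); analogous roles are permuted cyclically for~$m=1$ and~$m=2$. The single fixed process~$E$ must then, in each of the three configurations, route the XOR of the two ``active'' outputs to the passive party's input wire, so that the passive party can read off the required parity.

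The technical heart is the construction of~$E$. Since all environmental wires are binary,~$E$ is a function from~$\{0,1\}^3$ to~$\{0,1\}^3$; the three game constraints fix~$E$'s behaviour on the slices~$\{o_R=0\}$,~$\{o_S=0\}$, and~$\{o_T=0\}$ (up to the marker convention), leaving the remaining entries as degrees of freedom with which to enforce logical consistency. I would take~$E$ to be the process exhibited in~\cite{Baumeler:2014cw} (equivalently~\cite{Baumeler:2015wx}), whose truth table can be written down explicitly. Naive symmetric alternatives such as cyclic XOR, pairwise AND, or majority either fail logical consistency --- e.g., under all-identity strategies cyclic XOR produces four fixed-points~$(o_R,o_S,o_T)$ with~$o_R\oplus o_S\oplus o_T=0$ instead of one --- or fail to simultaneously reproduce XOR in all three rotational slices, so the choice of~$E$ is where the real work lies.

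Once~$E$ is in hand, verification splits into two finite tasks. First, for each~$m\in\{0,1,2\}$, plug the corresponding local operations into~(\ref{eq:classical}) and read off the passive party's output from the unique fixed-point of the closed loop; this is a one-line check per value of~$m$. Second, logical consistency reduces, via~(\ref{eq:trcond}), to verifying~$\mathrm{Tr}\bigl(\bm{E}(\bm{\mathcal{E}'}\otimes\bm{\mathcal{F}'}\otimes\bm{\mathcal{G}'})\bigr)=1$ for every deterministic triple of local operations, of which there are at most~$4^3=64$ on binary wires; each such check amounts to confirming that the self-referential system~$o_R=f_R(i_R)$,~$o_S=f_S(i_S)$,~$o_T=f_T(i_T)$ coupled with~$E$ has exactly one solution.

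The main obstacle, then, is not the verification but the discovery of an~$E$ rigid enough to be logically consistent on all~$64$ deterministic triples, yet flexible enough to implement three distinct effective causal orderings selected by the parties' strategies. Alternatively, one can appeal to the fixed-point characterization developed in Section~\ref{sec:fixedpoint}, which will recast the tedious enumeration as a structural statement about uniqueness of fixed-points; from that vantage point the logical consistency of the proposed~$E$ becomes a clean consequence, and the perfect winning probability follows by evaluating the fixed-point in each of the three rotational configurations.
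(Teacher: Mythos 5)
Your overall plan --- exhibit one fixed process $E$ plus rotationally symmetric local strategies, then verify logical consistency by a finite fixed-point check --- has the right shape, but two of its concrete commitments are wrong and would sink the verification as written. First, the process that actually wins Game~\ref{game:3party}, the one from~\cite{Baumeler:2014cw} that you say you would take, is \emph{not} a deterministic function from $\{0,1\}^3$ to $\{0,1\}^3$. It is the uniform mixture of the cyclic identity channel $R\to S\to T\to R$ and the cyclic bit-flip channel: $P_{\bm I|\bm O}(\bm i,\bm o)=1/2$ when $i_R=o_T$, $i_S=o_R$, $i_T=o_S$, and $1/2$ when all three relations hold with an extra $\oplus 1$. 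Consequently your consistency criterion (``exactly one solution of the self-referential system for each of the $64$ deterministic triples'') does not apply to it: under all-identity local operations the identity branch has two fixed-points and the flip branch has none. Consistency holds only because the \emph{average} number of fixed-points over the mixture equals one for every deterministic triple --- this is Theorem~\ref{thm:fpp}, worked out for precisely this process in Example~\ref{ex:finetuned} --- whereas the unique-fixed-point characterization you invoke is Theorem~\ref{thm:fp} and is confined to the deterministic-extrema polytope, outside of which this fine-tuned process lies.

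Second, your division of labour between $E$ and the local operations is internally inconsistent. You have $S$ and $T$ forward their raw inputs and ask $E$ to ``route the XOR of the two active outputs to the passive party's input wire,'' but that $E$ is exactly the pairwise-XOR process you yourself reject two sentences later (four fixed-points $(o_R,o_S,o_T)$ with vanishing total parity under all-identity strategies). With the actual $E$, which is just a cyclic channel and computes nothing, the XOR must be taken \emph{locally} by the party sitting immediately upstream of the guesser. For $m=0$: $R$ guesses $x=i_R$ and emits a dummy $o_R=0$; $S$ forwards $o_S=b$; and $T$ emits $o_T=i_T\oplus c$, so that in the identity branch $i_R=o_T=b\oplus c$, while in the flip branch $i_R=o_T\oplus 1=(b\oplus 1)\oplus c\oplus 1=b\oplus c$ as well, the two bit-flips cancelling inside the XOR. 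Without this relocation of the XOR into the local operation of the ``middle'' party, no logically consistent $E$ matches the strategies you describe, and the claimed $p_{\mathrm{succ2}}=1$ is not reached.
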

Consider the logically consistent classical process~$E$
\begin{eqnarray*}
	P_{\bm I|\bm O}(\bm i,\bm o)=\cases{
		1/2&for $i_R=o_T\wedge i_S=o_R\wedge i_T=o_S$\\
		1/2&for $i_R=o_T\oplus 1\wedge i_S=o_R\oplus 1\wedge i_T=o_S\oplus 1$\\
		0&otherwise.
	}
\end{eqnarray*}
This process is the uniform mixture of the identity channel from~$R$ to~$S$ to~$T$ to~$R$ and the bit-flip channel from~$R$ to~$S$ to~$T$ to~$R$ (see Figure~\ref{fig:stopro}).
\begin{figure}
\centering
\begin{tikzpicture}
	\def\r{1}
	\def\d{20}
	\def\dx{2.1}
	\def\hd{.5}

	\draw[->] (-\dx,0)++(90-\d:\r) arc (90-\d:-30+\d:\r);
	\draw[->] (-\dx,0)++(-30-\d:\r) arc (-30-\d:-150+\d:\r);
	\draw[->] (-\dx,0)++(-150-\d:\r) arc (-150-\d:-270+\d:\r);
	\draw (-\dx,0)++(90:\r) node (A) {$R$};
	\draw (-\dx,0)++(-30:\r) node (B) {$S$};
	\draw (-\dx,0)++(-150:\r) node (C) {$T$};

	\draw[->] (\dx,0)++(90-\d:\r) arc (90-\d:-30+\d:\r);
	\draw[->] (\dx,0)++(-30-\d:\r) arc (-30-\d:-150+\d:\r);
	\draw[->] (\dx,0)++(-150-\d:\r) arc (-150-\d:-270+\d:\r);
	\draw (\dx,0)++(90:\r) node (A2) {$R$};
	\draw (\dx,0)++(-30:\r) node (B2) {$S$};
	\draw (\dx,0)++(-150:\r) node (C2) {$T$};
	\draw (\dx,0)++(90-60:\r+.3) node (n1) {$\oplus 1$};
	\draw (\dx,0)++(-30-60:\r+.3) node (n2) {$\oplus 1$};
	\draw (\dx,0)++(-150-60:\r+.3) node (n3) {$\oplus 1$};

	\draw (-\hd/2,0) node (PLUS) {\LARGE $+$};
	\draw (-\dx-\r-\hd,0) node (half1) {\LARGE $\frac{1}{2}$};
	\draw (\dx-\r-\hd,0) node (half2) {\LARGE $\frac{1}{2}$};
\end{tikzpicture}
\caption{Logically consistent classical process used to perfectly win Game~\ref{game:3party}.}
\label{fig:stopro}
\end{figure}
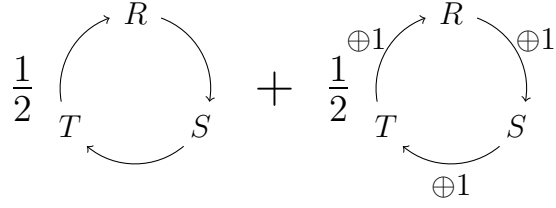
By combining it with the local operations
\begin{eqnarray*}
	\mathcal{E}(x,o_R,a,m,i_R)=\cases{
		1&for $m=0\wedge x=i_R\wedge o_R=0$\\
		1&for $m=1\wedge x=0\wedge o_R=i_R\oplus a$\\
		1&for $m=2\wedge x=0\wedge o_R=a$\\
		0&otherwise,
	}\\
	\mathcal{F}(y,o_S,b,m,i_S)=\cases{
		1&for $m=0\wedge y=0\wedge o_S=b$\\
		1&for $m=1\wedge y=i_S\wedge o_S=0$\\
		1&for $m=2\wedge y=0\wedge o_S=i_S\oplus b$\\
		0&otherwise,
	}\\
	\mathcal{G}(z,o_T,c,m,i_T)=\cases{
		1&for $m=0\wedge z=0\wedge o_T=i_T\oplus c$\\
		1&for $m=1\wedge z=0\wedge o_T=c$\\
		1&for $m=2\wedge z=i_T\wedge o_T=0$\\
		0&otherwise,
	}
\end{eqnarray*}
Game~\ref{game:3party} is won perfectly.
The operations can be understood as follows.
The party who has to make the guess simply uses the system that she receives from the environment as guess.
If the party who has to produce a guess is the next in the chain ($R$ to~$S$ to~$T$ to~$R$) from the point of view of the actual party, then the actual party returns the parity of the system she obtained from the environment and her input.
In the third case, if the party who has to guess is two steps ahead in the chain, then the actual party returns her input to the environment.

\begin{game}
	{\bf\cite{Baumeler:2015wx}}
	{\rm (Three-party non-causal game)}
	\rm
	\label{game:3partydet}
	Let~$R=(\hat A,X,\mathcal{E})$,~$S=(\hat B,Y,\mathcal{F})$, and~$T=(\hat C,Z,\mathcal{G})$ be three parties that aim at maximizing
	\begin{eqnarray*}
		p_\mathrm{succ3}&=&
		\frac{1}{2}\left(
			\Pr\left(X=\hat C,Y=\hat A,Z=\hat B\,|\,\mathrm{maj}(\hat A,\hat B,\hat C)=0\right)\right.\\
			&+&\left.\Pr\left(X=\hat B\oplus 1,Y=\hat C\oplus 1,Z=\hat A\oplus 1\,|\,\mathrm{maj}(\hat A,\hat B,\hat C)=1\right)\right)
			\,,
	\end{eqnarray*}
	where all random variables are binary, and where~$\mathrm{maj}(a,b,c)$ is the majority of~$(a,b,c)$.
	All inputs are uniformly distributed.
	In words, if the majority of the inputs is~$0$, then the parties have to guess their neighbour's input in one direction.
	Otherwise, if the majority of the inputs is~$1$, then the parties have to guess their neighbour's inverted input in the opposite direction.
\end{game}
We give an upper bound on the success probability of Game~\ref{game:3partydet} under the assumption of a predefined causal order.
\begin{theorem}
	{\bf\cite{Baumeler:2015wx}}
	\label{thm:3partydet}
	Under the assumption of a predefined causal order, Game~\ref{game:3partydet} can at best be won with probability~$3/4$.
\end{theorem}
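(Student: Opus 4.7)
The plan is to mirror the arguments of Theorems~\ref{thm:2party} and~\ref{thm:3party}: invoke Lemma~\ref{lemma:necessarycondition} to isolate a party with an uninformed output, then upper-bound $p_\mathrm{succ3}$ by the probability that just this one output is correct.

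First, I would observe that Game~\ref{game:3partydet} carries a three-fold cyclic symmetry under the simultaneous relabelling $(R,S,T)\mapsto(S,T,R)$ and $(\hat A,\hat B,\hat C)\mapsto(\hat B,\hat C,\hat A)$: the three conjuncts $X=\hat C$, $Y=\hat A$, $Z=\hat B$ simply permute, and likewise those of the $\mathrm{maj}=1$ branch, while $\mathrm{maj}$ is itself permutation-invariant. By Definition~\ref{def:predefined} a predefined causal order is a convex combination of orderings, each of which by Lemma~\ref{lemma:necessarycondition} has a party not in the causal future of either other. By linearity combined with the cyclic symmetry, I may assume without loss of generality that this party is $R$; then $X$ depends only on $\hat A$ (local randomness merely convexifies the strategy space and is useless for this bound), so I may take $X=f(\hat A)$ for some deterministic $f\colon\{0,1\}\to\{0,1\}$.

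Second, I would reduce to a single-variable counting problem. Winning requires in particular the conjunct $X=X^\star(\hat A,\hat B,\hat C)$ where $X^\star=\hat C$ on $\mathrm{maj}=0$ and $X^\star=\hat B\oplus 1$ on $\mathrm{maj}=1$. Using $\Pr(\mathrm{maj}=0)=\Pr(\mathrm{maj}=1)=1/2$, a short manipulation of the definition gives $p_\mathrm{succ3}\le\Pr\bigl(f(\hat A)=X^\star\bigr)$. Enumerating the four equiprobable triples with $\hat A=0$ yields targets $X^\star=0,1,0,0$ for $(\hat B,\hat C)=(0,0),(0,1),(1,0),(1,1)$, so the optimal choice $f(0)=0$ matches with conditional probability $3/4$; the enumeration for $\hat A=1$ is identical, giving $3/4$ via $f(1)=0$. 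Averaging over the uniform marginal of $\hat A$ then yields $\max_f\Pr\bigl(f(\hat A)=X^\star\bigr)=3/4$, which is the claimed upper bound.

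The argument is elementary; the only subtle point is the reduction step. In particular, I would want to check that the flexibility in Definition~\ref{def:predefined}---allowing the ordering of the two later parties to depend on the first party's output---does not alter who is first, so that the counting bound on $\Pr\bigl(f(\hat A)=X^\star\bigr)$ applies without any appeal to the strategies of $S$ or $T$.
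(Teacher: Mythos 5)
Your proposal is correct and follows essentially the same route as the paper: invoke Lemma~\ref{lemma:necessarycondition} to place some party (WLOG $R$) first, note that its output can then depend only on its own input, and observe from the truth table that the constant guess $X=0$ is correct in $6$ of the $8$ equiprobable input triples, giving the bound $3/4$. The paper states the WLOG and the counting more tersely (via Table~\ref{tab:3partydet}), whereas you additionally justify the WLOG by the game's cyclic symmetry and make the reduction to the single conjunct $X=X^\star$ explicit; these are refinements of, not departures from, the paper's argument.
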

\begin{proof}
	The truth table of Game~\ref{game:3partydet} is given in Table~\ref{tab:3partydet}.
	\begin{table}
		\centering
		\begin{tabular}{ccc|c|ccc}
			$\hat A$&$\hat B$&$\hat C$&$\mathrm{maj}(\hat A,\hat B,\hat C)$&$X$&$Y$&$Z$\\
			\hline
			0&0&0&0&0&0&0\\
			0&0&1&0&1&0&0\\
			0&1&0&0&0&0&1\\
			0&1&1&1&0&0&1\\
			1&0&0&0&0&1&0\\
			1&0&1&1&1&0&0\\
			1&1&0&1&0&1&0\\
			1&1&1&1&0&0&0
		\end{tabular}
		\caption{Conditions for winning Game~\ref{game:3partydet}.}
		\label{tab:3partydet}
	\end{table}
	A necessary condition for predefined causal order is that at least one party is not in the causal future of any other party (see Lemma~\ref{lemma:necessarycondition}).
	Without loss of generality, let~$R$ be this party.
	Therefore, party~$R$ has to output~$X$ without learning any other random variable than~$\hat A$.
	By inspecting Table~\ref{tab:3partydet}, we see that~$X=0$ is~$R$'s best guess.
	Thus, in at least 2 out of the~8 cases, the game is lost, resulting in~$p_\mathrm{succ3}\leq 3/4$.
\end{proof}
\begin{theorem}
	{\bf\cite{Baumeler:2015wx}}
	\label{thm:3partydetperfect}
	By using the logically consistent classical process framework, Game~\ref{game:3partydet} can be won perfectly.
\end{theorem}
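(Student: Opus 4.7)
The plan is to exhibit a single deterministic classical process~$E$ together with one deterministic choice of local operations for the three parties, and to verify winning on all eight rows of Table~\ref{tab:3partydet} by direct calculation. Logical consistency of~$E$ is a separate, finite case check that I would defer to the fixed-point characterization of Section~\ref{sec:fixedpoint}.

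Concretely, I would let every party copy its classical input onto its outgoing wire and its incoming wire onto its outcome register, i.e.\ take the deterministic operation
\begin{eqnarray*}
    \mathcal{E}(x,o_R,a,i_R) = \cases{
        1 & for $x = i_R \wedge o_R = a$\\
        0 & otherwise,
    }
\end{eqnarray*}
and analogously for~$\mathcal{F}$ and~$\mathcal{G}$. For the environment I would propose the deterministic classical process of~\cite{Baumeler:2015wx}, written compactly as
\begin{eqnarray*}
    i_R = (1\oplus o_S)\,o_T\,,\quad i_S = (1\oplus o_T)\,o_R\,,\quad i_T = (1\oplus o_R)\,o_S\,.
\end{eqnarray*}
With these operations one has $(o_R,o_S,o_T) = (\hat A,\hat B,\hat C)$, so that the referee receives
\begin{eqnarray*}
    (X,Y,Z) = \left( (1\oplus \hat B)\hat C,\ (1\oplus \hat C)\hat A,\ (1\oplus \hat A)\hat B \right)\,.
\end{eqnarray*}

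I would then compare this triple with the target outputs row by row of Table~\ref{tab:3partydet}. When $\mathrm{maj}(\hat A,\hat B,\hat C)=0$, at most one input equals~$1$, so each of the three bit-negations $(1\oplus \hat B)$, $(1\oplus \hat C)$, $(1\oplus \hat A)$ that multiplies a nonzero factor is itself~$1$; the triple collapses to $(\hat C,\hat A,\hat B)$, matching the first column of the table. When $\mathrm{maj}(\hat A,\hat B,\hat C)=1$, a four-row check shows the triple reduces to $(\hat B\oplus 1,\ \hat C\oplus 1,\ \hat A\oplus 1)$, matching the second column. Hence $p_\mathrm{succ3}=1$.

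The remaining, and only non-trivial, step is to argue that~$E$ itself is logically consistent, i.e., that the trace condition~(\ref{eq:trcond}) holds for {\em every\/} choice of deterministic local operations~$\mathcal{E}',\mathcal{F}',\mathcal{G}'$, not only the ``identity'' ones used above. Since the wires carry single bits and the parties' inputs are fixed during consistency checking, the adversary has only $4^3=64$ deterministic triples $(g_R,g_S,g_T)$ with $g_P\colon I_P\to O_P$ to try; for each such triple one must show that the induced self-map on $(i_R,i_S,i_T)$ possesses exactly one fixed point. This is the unique-fixed-point phenomenon that is the very subject of Section~\ref{sec:fixedpoint}, and it is carried out explicitly for this process in~\cite{Baumeler:2015wx}; in a fully written proof I would cite that argument rather than redo the enumeration. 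The main obstacle is precisely this consistency verification: the winning calculation is a trivial Boolean check, but one must exclude that any adversarial choice—constant, flipping, or $i_P$-dependent—creates a causal paradox, i.e., zero or multiple fixed points of the composed map. Relying on the fixed-point characterization of the next section reduces this to the finite case analysis already in the literature.
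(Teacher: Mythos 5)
Your proposal is correct and follows essentially the same route as the paper: you use the same deterministic process (your compact form $i_R=(1\oplus o_S)o_T$, etc., is exactly the function $\bm e$ of Example~\ref{ex:fpnoncausal}, which coincides with the paper's case-based definition via $\mathrm{maj}(\bm o)$), the same ``copy'' local operations, and you verify the eight rows of Table~\ref{tab:3partydet}. Deferring the logical-consistency check of $E$ to the fixed-point analysis and to~\cite{Baumeler:2015wx} matches what the paper itself does.
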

The logically consistent classical process~$E$ to perfectly win Game~\ref{game:3partydet} is
\begin{eqnarray*}
	P_{\bm I|\bm O}(\bm i,\bm o)=\cases{
		1&for $\mathrm{maj}(\bm o)=0\wedge i_R=o_T\wedge i_S=o_R\wedge i_T=o_S$\\
		1&for $\mathrm{maj}(\bm o)=1\wedge i_R=o_S\oplus 1\wedge i_S=o_T\oplus 1$\\
		&$\quad\wedge i_T=o_R\oplus 1$\\
		0&otherwise.
	}
\end{eqnarray*}
The local operations are
\begin{eqnarray*}
	\mathcal{E}(x,o_R,a,i_R)=\cases{
		1&for $x=i_R\wedge o_R=a$\\
		0&otherwise,
	}\\
	\mathcal{F}(y,o_S,b,i_S)=\cases{
		1&for $y=i_S\wedge o_S=b$\\
		0&otherwise,
	}\\
	\mathcal{G}(z,o_T,c,i_T)=\cases{
		1&for $z=i_T\wedge o_T=c$\\
		0&otherwise.
	}
\end{eqnarray*}
A graphical representation of the classical process is given in Figure~\ref{fig:detpro}.
\begin{figure}
	\centering
	\begin{tikzpicture}
		\def\r{1}
		\def\d{20}
		\def\dx{3}
		\def\hd{.5}

		\draw[->] (-\dx,0)++(90-\d:\r) arc (90-\d:-30+\d:\r);
		\draw[->] (-\dx,0)++(-30-\d:\r) arc (-30-\d:-150+\d:\r);
		\draw[->] (-\dx,0)++(-150-\d:\r) arc (-150-\d:-270+\d:\r);
		\draw (-\dx,0)++(90:\r) node (A) {$R$};
		\draw (-\dx,0)++(-30:\r) node (B) {$S$};
		\draw (-\dx,0)++(-150:\r) node (C) {$T$};

		\draw[<-] (\dx,0)++(90-\d:\r) arc (90-\d:-30+\d:\r);
		\draw[<-] (\dx,0)++(-30-\d:\r) arc (-30-\d:-150+\d:\r);
		\draw[<-] (\dx,0)++(-150-\d:\r) arc (-150-\d:-270+\d:\r);
		\draw (\dx,0)++(90:\r) node (A2) {$R$};
		\draw (\dx,0)++(-30:\r) node (B2) {$S$};
		\draw (\dx,0)++(-150:\r) node (C2) {$T$};
		\draw (\dx,0)++(90-60:\r+.3) node (n1) {$\oplus 1$};
		\draw (\dx,0)++(-30-60:\r+.3) node (n2) {$\oplus 1$};
		\draw (\dx,0)++(-150-60:\r+.3) node (n3) {$\oplus 1$};

		\draw (-\dx-\r-\hd,\r+0.5) node (half1) {$\mathrm{maj}(O_R,O_S,O_T)=0$};
		\draw (\dx-\r-\hd,\r+0.5) node (half2) {$\mathrm{maj}(O_R,O_S,O_T)=1$};
	\end{tikzpicture}
	\caption{Logically consistent classical process used to win Game~\ref{game:3partydet}.}
	\label{fig:detpro}
\end{figure}
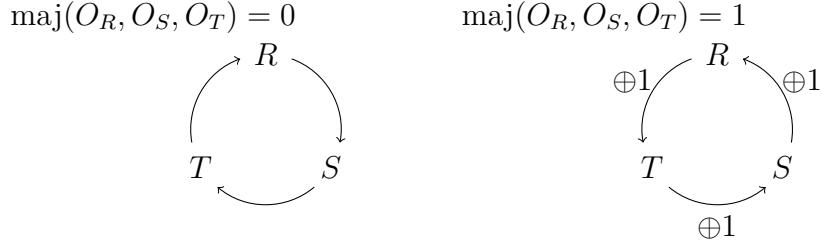

\subsection{Geometric representation}
The set of logically consistent classical processes forms a polytope~\cite{Baumeler:2015wx}, which is defined by the linear conditions~(\ref{eq:mnonneg}) and~(\ref{eq:trcond}).
Both classical processes used to perfectly win Games~\ref{game:3party} and~\ref{game:3partydet} are {\em extremal points\/} of the mentioned polytope.
In contrast to the process of Game~\ref{game:3partydet}, the classical process of Game~\ref{game:3party} is a {\em mixture\/} of {\em logically inconsistent\/} processes.
Such a classical process must be {\em fine-tuned}: Tiny variations of the probabilities make the classical process {\em logically inconsistent}.
This motivates the definition of a smaller polytope, where {\em all\/} extremal points represent {\em deterministic\/} classical processes.
This smaller polytope is called {\em deterministic-extrema polytope}.
Qualitative representations of both polytopes are given in Figure~\ref{fig:polytopes}.
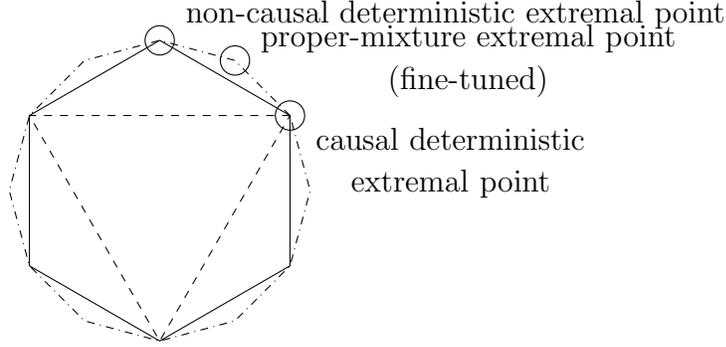
\begin{figure}
	\centering
	\begin{tikzpicture}[every text node part/.style={align=center}]
		\def\r{2}
		\draw[-,dashed] (30:\r) -- (150:\r) -- (270:\r) -- (30:\r);
		\draw[-] (30:\r) -- (90:\r) -- (150:\r) -- (210:\r) -- (270:\r) -- (330:\r) -- (30:\r);
		\draw[-,dashdotted] (30:\r) node[draw,solid,circle,label={-4:causal deterministic\\ extremal point}] (DC) {} -- (60:\r) node[draw,solid,circle,label={0:proper-mixture extremal point\\(fine-tuned)}] (PC) {} -- (90:\r) node[draw,solid,circle,label={3:non-causal deterministic extremal point}] (DNC) {} -- (120:\r) -- (150:\r) -- (180:\r) -- (210:\r) -- (240:\r) -- (270:\r) -- (300:\r) -- (330:\r) -- (0:\r) -- (30:\r);
	\end{tikzpicture}
	\caption{Qualitative representation of the polytopes discussed in the article. The outer polytope consists of all logically consistent classical processes. The middle polytope is the same as the outer except that all {\em probabilistic\/} extremal points are omitted. The innermost polytope is the polytope of all {\em causal\/} classical processes.}
	\label{fig:polytopes}
\end{figure}

\section{Characterizing logical consistency with fixed-points}
\label{sec:fixedpoint}
The above considerations on classical processes allow us to characterize logically consistent classical process via functions with {\em unique\/} fixed-points.
For that purpose, we redefine party~$R=(\mathcal{A},\mathcal{X},\bm{f})$ with~$\bm{f}:\mathcal{A}\times\mathcal{I}_R\rightarrow\mathcal{X}\times\mathcal{O}_R$ where~$\mathcal{A}$,~$\mathcal{X}$,~$\mathcal{I}_R$, and~$\mathcal{O}_R$ are sets.
Similarly, we redefine~$S=(\mathcal{B},\mathcal{X},\bm{g})$ with~$\bm{g}:\mathcal{B}\times\mathcal{I}_S\rightarrow\mathcal{Y}\times\mathcal{O}_S$ and~$T=(\mathcal{C},\mathcal{Z},\bm{h})$ with~$\bm{h}:\mathcal{C}\times\mathcal{I}_T\rightarrow\mathcal{Z}\times\mathcal{O}_T$.
Further, let~$f:\mathcal{I}_R\rightarrow\mathcal{O}_R$,~$g:\mathcal{I}_S\rightarrow\mathcal{O}_S$, and~$h:\mathcal{I}_T\rightarrow\mathcal{O}_T$ be functions.
\begin{theorem}[Deterministic extremal points modeled by functions with unique fixed-point]
	\label{thm:fp}
	A function~$\bm{e}:\mathcal{O}_R\times\mathcal{O}_S\times\mathcal{O}_T\rightarrow\mathcal{I}_R\times\mathcal{I}_S\times\mathcal{I}_T$ represents an extremal point of the deterministic-extrema polytope if and only if
	\begin{eqnarray*}
		\forall f,g,h,\exists ! (k,\ell,m):\,(k,\ell,m)=\bm{e}(f(k),g(\ell),h(m))
		\,,
	\end{eqnarray*}
	where~$\exists !$ is the uniqueness quantifier.
\end{theorem}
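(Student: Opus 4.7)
The plan is to reduce everything to a combinatorial interpretation of the trace condition~(\ref{eq:trcond}) when restricted to deterministic local operations---already shown to be sufficient in the text preceding the theorem. First I would observe that the extremal points of the deterministic-extrema polytope are, by construction, precisely the deterministic logically consistent classical processes: each such process corresponds to a function $\bm{e}$ and is represented by a $\{0,1\}$-valued stochastic matrix $\bm{E}$, which cannot be written as a nontrivial convex combination of other $\{0,1\}$-valued matrices, so these are exactly the vertices of the polytope. This reduces the theorem to showing that such an $\bm{e}$ is logically consistent if and only if the induced map $(k,\ell,m)\mapsto\bm{e}(f(k),g(\ell),h(m))$ admits a unique fixed-point for every triple $(f,g,h)$ of deterministic operations.

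Second, because $\bm{E}$ has entries in $\{0,1\}$, the non-negativity condition~(\ref{eq:mnonneg}) is automatic, so logical consistency boils down to~(\ref{eq:trcond}) applied to deterministic $\bm{\mathcal{E'}},\bm{\mathcal{F'}},\bm{\mathcal{G'}}$. For such operations, $\bm{\mathcal{E'}}$ is the $\{0,1\}$-matrix sending the basis vector indexed by $i_R$ to the one indexed by $f(i_R)$, and analogously for $\bm{\mathcal{F'}},\bm{\mathcal{G'}}$. Expanding the trace in the computational basis then yields
\begin{eqnarray*}
\mathrm{Tr}\bigl(\bm{E}(\bm{\mathcal{E'}}\otimes\bm{\mathcal{F'}}\otimes\bm{\mathcal{G'}})\bigr)
=\bigl|\{(k,\ell,m)\mid\bm{e}(f(k),g(\ell),h(m))=(k,\ell,m)\}\bigr|\,.
\end{eqnarray*}
The left-hand side is therefore a non-negative integer counting fixed-points of the composed map; it equals $1$ for every choice of $(f,g,h)$ if and only if the unique-fixed-point property stated in the theorem holds. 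Both directions of the biconditional follow immediately from this identity.

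I expect the main obstacle to be essentially notational rather than mathematical: one must carefully align the function-based reformulation introduced just before the theorem with the stochastic-matrix formalism of Section~\ref{sec:classical}, and verify that the deterministic-extrema polytope indeed has all and only deterministic logically consistent processes as vertices, so that the biconditional of the theorem pins down exactly the intended class. Once that bookkeeping is carried out, the substantive content of the proof is precisely the trace-equals-fixed-point-count identity displayed above, together with the observation that restriction to deterministic local operations is both necessary (specialization) and sufficient (by convexity, as argued earlier in the paper).
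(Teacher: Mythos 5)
Your proposal is correct and follows essentially the same route as the paper's own proof: both reduce the claim to the total-probability condition~(\ref{eq:trcond}) restricted to deterministic local operations and then read off the trace of the $\{0,1\}$-valued matrix $\bm{E}(\bm{\mathcal{E'}}\otimes\bm{\mathcal{F'}}\otimes\bm{\mathcal{G'}})$ as the number of fixed-points of the composed map. If anything, your write-up is more complete than the paper's, which only sketches the forward direction; your explicit treatment of both directions and of why deterministic processes are exactly the vertices is a welcome addition rather than a deviation.
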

\begin{proof}
	Recall the total-probability condition~(\ref{eq:trcond}) that we express using stochastic matrices
	\begin{eqnarray*}
		\forall \mathcal{E'},\mathcal{F'},\mathcal{G'}\in\mathcal{D}:\,\mathrm{Tr}
		\left( \bm{E}\left(\bm{\mathcal{E'}}\otimes \bm{\mathcal{F'}}\otimes \bm{\mathcal{G'}}\right) \right)=1
		\,,
	\end{eqnarray*}
	where~$\mathcal{D}$ is the set of all {\em deterministic\/} local operations.
	By definition, a classical process~$E$ that is an extremal point of the deterministic-extrema polytope is deterministic.
	This implies that the matrix
	\begin{eqnarray*}
		\bm{M}=\bm{E}\left(\bm{\mathcal{E'}}\otimes \bm{\mathcal{F'}}\otimes \bm{\mathcal{G'}}\right)
		\,,
	\end{eqnarray*}
	for some~$\mathcal{E'},\mathcal{F'},\mathcal{G'}\in\mathcal{D}$, is a stochastic matrix with trivial probabilities.
	The diagonal~$\bm q$ of~$\bm M$ consists of~$0$'s and exactly one~$1$.
	In particular, by applying the stochastic matrix~$\bm M$ to~$\bm q$ results in~$\bm q$, {\em i.e.},~$\bm{M}\bm{q}=\bm{q}$, which means that~$\bm q$ is a unique fixed-point of~$\bm M$.
\end{proof}

Theorem~\ref{thm:fp} implies that deterministic causal and non-causal classical processes can be expressed as functions with one and only one fixed-point if the former is composed with arbitrary deterministic local operations.
Such a function for a causal classical process is given in Example~\ref{ex:fpcausal} and for a non-causal classical process in Example~\ref{ex:fpnoncausal}.
\begin{example}[Causal classical process as a function]
	\rm
	\label{ex:fpcausal}
	Let the sets~$\mathcal{A}$,~$\mathcal{X}$,~$\mathcal{I}_R$, and~$\mathcal{O}_R$, and likewise for the other two parties, be~$\{0,1\}$.
	Consider the causal classical process~$E$ with an identity channel from party~$R$ to~$S$ and an identity channel from party~$S$ to~$T$, and where~$R$ receives the element~$0$ from the environment.
	We express~$E$ as a function~$\bm e:\mathcal O_R\times\mathcal O_S\times\mathcal O_T\rightarrow\mathcal I_R\times\mathcal I_S\times\mathcal I_T$:
	\begin{eqnarray*}
		\bm e:(o,p,q)\mapsto(0,o,p)
		\,.
	\end{eqnarray*}
	Since the sets~$\mathcal O_R$ and~$\mathcal I_R$, and likewise for the other parties, are binary, the local operation of a party is one out of four functions:
	\begin{eqnarray*}
		d_\mathrm{id}&:&i\mapsto i\,,\\
		d_\mathrm{not}&:&i\mapsto i\oplus 1\,,\\
		d_0&:&i\mapsto 0\,,\\
		d_1&:&i\mapsto 1\,.
	\end{eqnarray*}
	For any choice~$f,g,h$ of the local operations, the function
	\begin{eqnarray*}
		\bm e\circ(f,g,h):\mathcal I_R\times\mathcal I_S\times\mathcal I_T\rightarrow\mathcal I_R\times\mathcal I_S\times\mathcal I_R
	\end{eqnarray*}
	has a unique fixed-point.
	We discuss a few cases.
	First, let~$f=g=h=d_\mathrm{id}$.
	Then, the fixed-point is~$(0,0,0)$, as can be verified by Table~\ref{tab:ex1}, which describes the composed function~$\bm e\circ(d_\mathrm{id},d_\mathrm{id},d_\mathrm{id})$.
	\begin{table}
		\centering
		\begin{tabular}{ccc|ccc}
			$\mathcal I_R$&$\mathcal I_S$&$\mathcal I_T$&$\mathcal I_R$&$\mathcal I_S$&$\mathcal I_T$\\
			\hline
			0&0&0&0&0&0\\
			0&0&1&0&0&0\\
			0&1&0&0&0&1\\
			0&1&1&0&0&1\\
			1&0&0&0&1&0\\
			1&0&1&0&1&0\\
			1&1&0&0&1&1\\
			1&1&1&0&1&1
		\end{tabular}
		\caption{Classical process from Example~\ref{ex:fpcausal} as a function composed with the identity as local operations of all three parties.}
		\label{tab:ex1}
	\end{table}
	In the case~$f=g=d_\mathrm{id}$ and~$h=d_\mathrm{not}$, the fixed-point is~$(0,0,0)$ as well.
	In the case~$f=g=h=d_\mathrm{not}$, the fixed-point is~$(0,1,0)$.
	A final case we express explicitly is~$f=g=h=d_1$; the fixed-point is~$(0,1,1)$.
\end{example}
\begin{example}[Non-causal classical process as a function]
	\rm
	\label{ex:fpnoncausal}
	We use the same parties as they are described in Example~\ref{ex:fpcausal}.
	Take the logically consistent classical process that can be used to win Game~\ref{game:3partydet}.
	A graphical representation of this classical process is given in Figure~\ref{fig:detpro}.
	We rewrite the process as a function~$\bm e:\mathcal O_R\times\mathcal O_S\times\mathcal O_T\rightarrow\mathcal I_R\times\mathcal I_S\times\mathcal I_T$:
	\begin{eqnarray*}
		\bm e:(o,p,q)\mapsto((p\oplus 1)q,o(q\oplus 1),(o\oplus 1)p)
		\,.
	\end{eqnarray*}
	If the parties use the identity as local operations, then~$\bm e\circ(d_\mathrm{id},d_\mathrm{id},d_\mathrm{id})$ has a unique fixed-point~$(0,0,0)$, as can be verified by inspecting Table~\ref{tab:ex2}.
	\begin{table}
		\centering
		\begin{tabular}{ccc|ccc}
			$\mathcal I_R$&$\mathcal I_S$&$\mathcal I_T$&$\mathcal I_R$&$\mathcal I_S$&$\mathcal I_T$\\
			\hline
			0&0&0&0&0&0\\
			0&0&1&1&0&0\\
			0&1&0&0&0&1\\
			0&1&1&0&0&1\\
			1&0&0&0&1&0\\
			1&0&1&1&0&0\\
			1&1&0&0&1&0\\
			1&1&1&0&0&0
		\end{tabular}
		\caption{Classical process from Example~\ref{ex:fpnoncausal} as a function composed with the identity as local operations of all three parties.}
		\label{tab:ex2}
	\end{table}
	For the local operations~\mbox{$f=g=h=d_\mathrm{not}$}, the unique fixed-point is~$(0,0,0)$ as well.
	In another case, if the local operations are~$f=g=d_\mathrm{id}$ and~$h=d_\mathrm{not}$, then the unique fixed-point is~$(1,0,0)$.
\end{example}

We present a statement similar to Theorem~\ref{thm:fp}, yet for any logically consistent classical process (probabilistic or deterministic).
\begin{theorem}[Logically consistent classical process and fixed-points]
	\label{thm:fpp}
	A conditional probability distribution~$E=P_{I_R,I_S,I_T|O_R,O_S,O_T}$ is a logically consistent classical process if and only if one of the decomposition~$E=\sum_i p_i D_i$, where~$\forall i:D_i$ is a deterministic distribution, has the property
	\begin{eqnarray*}
		\forall f,g,h:\,\sum_i p_i\left|\{(k,\ell,m)\,|\,(k,\ell,m)=\bm d_i(f(k),g(\ell),h(m))\}\right|=1
		\,,
	\end{eqnarray*}
	where~$\bm d_i:\mathcal O_R\times\mathcal O_S\times\mathcal O_T\rightarrow\mathcal I_R\times\mathcal I_S\times\mathcal I_T$ is a function modelling~$D_i$.
\end{theorem}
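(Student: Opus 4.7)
The plan is to reduce the claim to the trace condition~(\ref{eq:trcond}), which already characterizes logical consistency, and then to identify the trace of the composed stochastic matrix with a fixed-point count exactly as in the proof of Theorem~\ref{thm:fp}. Since~(\ref{eq:trcond}) is multilinear in the three local operations, and every local operation is a convex combination of deterministic ones, it suffices to verify the trace condition against deterministic $f,g,h$; the paper already invokes this reduction in arguing that checking~(\ref{eq:trcond}) on $\mathcal{D}$ is sufficient.

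The key technical ingredient is the following observation. For any deterministic classical process modelled by a function $\bm d_i$ and any deterministic local operations $f,g,h$ (with $\bm f,\bm g,\bm h$ the corresponding $0/1$ stochastic matrices), the matrix product $\bm D_i(\bm f\otimes\bm g\otimes\bm h)$ is again a $0/1$ stochastic matrix, namely the one representing the composed function $\bm d_i\circ(f,g,h)$. Its trace therefore equals the number of diagonal entries that are~$1$, {\em i.e.}, $|\{(k,\ell,m)\,:\,(k,\ell,m)=\bm d_i(f(k),g(\ell),h(m))\}|$. This is precisely the identity underpinning the proof of Theorem~\ref{thm:fp}, now applied to deterministic processes that need not be extremal, so the count need not equal~$1$.

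For the forward direction, take any decomposition $E=\sum_i p_i D_i$ into deterministic processes, which exists because every stochastic matrix is a convex combination of $0/1$ stochastic matrices. Logical consistency of $E$ together with linearity of the trace give, for every choice of deterministic $f,g,h$,
\begin{eqnarray*}
1&=&\mathrm{Tr}\bigl(\bm E(\bm f\otimes\bm g\otimes\bm h)\bigr)\\
&=&\sum_i p_i\,\mathrm{Tr}\bigl(\bm D_i(\bm f\otimes\bm g\otimes\bm h)\bigr)\\
&=&\sum_i p_i\bigl|\{(k,\ell,m)\,:\,(k,\ell,m)=\bm d_i(f(k),g(\ell),h(m))\}\bigr|\,,
\end{eqnarray*}
which is the stated identity (and in fact holds for every such decomposition, not merely one).

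For the converse, starting from a decomposition satisfying the identity, reverse the chain: the trace condition holds for all deterministic local operations, and by multilinearity extends to arbitrary stochastic local operations, yielding~(\ref{eq:trcond}); combined with non-negativity of $E$ as a conditional distribution, this gives logical consistency. The main---and essentially the only---non-routine step is the matrix-trace/fixed-point correspondence, which is a direct restatement of the computation used in Theorem~\ref{thm:fp}; the rest of the argument is linearity bookkeeping together with the reduction from arbitrary to deterministic local operations.
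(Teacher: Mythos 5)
Your proposal is correct and follows essentially the same route as the paper: decompose $\bm E$ into deterministic $0/1$ stochastic matrices, use linearity of the trace in the total-probability condition~(\ref{eq:trcond}) restricted to deterministic local operations, and identify $\mathrm{Tr}\bigl(\bm D_i(\bm f\otimes\bm g\otimes\bm h)\bigr)$ with the number of fixed-points of the composed function. Your write-up is in fact slightly more explicit than the paper's, since you spell out both directions of the equivalence and note that the identity holds for every deterministic decomposition, not just one.
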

\begin{proof}
	The classical process~$\bm E$ can be decomposed as
	\begin{eqnarray*}
		\bm E=\sum_i p_i\bm D_i
		\,,
	\end{eqnarray*}
	where~$\forall i:\,D_i$ is a {\em deterministic\/} distribution.
	The total-probability condition~(\ref{eq:trcond}), that we express using stochastic matrices, is
	\begin{eqnarray*}
		\forall \mathcal{E'},\mathcal{F'},\mathcal{G'}\in\mathcal{D}:&\,&
		\mathrm{Tr}\left( \bm{E}\left(\bm{\mathcal{E'}}\otimes \bm{\mathcal{F'}}\otimes \bm{\mathcal{G'}}\right) \right)\\
		&=&\mathrm{Tr}\left( \sum_ip_i\bm D_i\left( \bm{\mathcal{E'}}\otimes\bm{\mathcal{F'}}\otimes\bm{\mathcal{G'}} \right) \right)\\
		&=&\sum_ip_i\mathrm{Tr}\left( \bm D_i\left( \bm{\mathcal{E'}}\otimes\bm{\mathcal{F'}}\otimes\bm{\mathcal{G'}} \right) \right)\\
		&=&\sum_ip_i\sum_{\bm j}\bm j^T\bm D_i\left( \bm{\mathcal{E'}}\otimes\bm{\mathcal{F'}}\otimes\bm{\mathcal{G'}} \right)\bm j\\
		&=&1
		\,,
	\end{eqnarray*}
	where~$\mathcal{D}$ is the set of all {\em deterministic\/} local operations, and where~$\bm j$ is a stochastic vector with all~$0$'s and exactly one~$1$.
	The expression
	\begin{eqnarray*}
		n_i:=\sum_{\bm j}\bm j^T\bm D_i\left( \bm{\mathcal{E'}}\otimes\bm{\mathcal{F'}}\otimes\bm{\mathcal{G'}} \right)\bm j
	\end{eqnarray*}
	is thus the number of fixed-points of the stochastic matrix~$\bm D_i\left( \bm{\mathcal{E'}}\otimes\bm{\mathcal{F'}}\otimes\bm{\mathcal{G'}} \right)$.
	From this we get~$\sum_i p_in_i=1$.
\end{proof}
The informal statement of Theorem~\ref{thm:fpp} is: The average number of fixed-points of a logically consistent classical process is one.
Clearly, the classical processes of Examples~\ref{ex:fpcausal} and~\ref{ex:fpnoncausal} can be rewritten as a convex combination of deterministic processes fulfilling this property.
However, there are classical processes which lie {\em outside\/} the deterministic-extrema polytope --- Theorem~\ref{thm:fp} does not apply to those --- to which we can apply Theorem~\ref{thm:fpp}.
Theorems~\ref{thm:fp} and~\ref{thm:fpp} can naturally be extend to more than three parties.

\begin{example}[Non-causal classical process as a mixture of functions]
	\rm
	\label{ex:finetuned}
	The classical process used to win Game~\ref{game:3party} is
	\begin{eqnarray*}
		P_{\bm I|\bm O}(\bm i,\bm o)=\cases{
			1/2&for $i_R=o_T\wedge i_S=o_R\wedge i_T=o_S$\\
			1/2&for $i_R=o_T\oplus 1\wedge i_S=o_R\oplus 1\wedge i_T=o_S\oplus 1$\\
			0&otherwise.
		}
	\end{eqnarray*}
	A graphical representation of this classical process is given in Figure~\ref{fig:stopro}.
	The classical process can be written as a convex combination~$E=(E_0+E_1)/2$ with
	\begin{eqnarray*}
		E_0(i_R,i_S,i_T,o_R,o_S,o_T)&=&(i_R=o_T\wedge i_S=o_R\wedge i_T=o_S)\,,\\
		E_1(i_R,i_S,i_T,o_R,o_S,o_T)&=&(i_R=o_T\oplus 1\wedge i_S=o_R\oplus 1\wedge i_T=o_S\oplus 1)\,.
	\end{eqnarray*}
	Now, we switch to the parties as defined in Example~\ref{ex:fpcausal} and rewrite~$E_0$ and~$E_1$ as
	functions~$\bm e_0:\mathcal O_R\times\mathcal O_S\times\mathcal O_T\rightarrow\mathcal I_R\times\mathcal I_S\times\mathcal I_T$ 
	and~$\bm e_1:\mathcal O_R\times\mathcal O_S\times\mathcal O_T\rightarrow\mathcal I_R\times\mathcal I_S\times\mathcal I_T$:
	\begin{eqnarray*}
		\bm e_0:(o,p,q)\rightarrow (q,o,p)\,,\\
		\bm e_1:(o,p,q)\rightarrow (q\oplus 1,o\oplus 1,p\oplus 1)\,.
	\end{eqnarray*}
	For the choice~$f=g=h=d_\mathrm{id}$ of local operations, the function~$\bm e_0\circ(d_\mathrm{id},d_\mathrm{id},d_\mathrm{id})$ has two fixed-points, and the function~$\bm e_1\circ(d_\mathrm{id},d_\mathrm{id},d_\mathrm{id})$ has no fixed-points (see Table~\ref{tab:ex3}).
	\begin{table}
		\centering
		\begin{tabular}{ccc|ccc}
			$\mathcal I_R$&$\mathcal I_S$&$\mathcal I_T$&$\mathcal I_R$&$\mathcal I_S$&$\mathcal I_T$\\
			\hline
			0&0&0&0&0&0\\
			0&0&1&1&0&0\\
			0&1&0&0&0&1\\
			0&1&1&1&0&1\\
			1&0&0&0&1&0\\
			1&0&1&1&1&0\\
			1&1&0&0&1&1\\
			1&1&1&1&1&1
		\end{tabular}
		\qquad
		\begin{tabular}{ccc|ccc}
			$\mathcal I_R$&$\mathcal I_S$&$\mathcal I_T$&$\mathcal I_R$&$\mathcal I_S$&$\mathcal I_T$\\
			\hline
			0&0&0&1&1&1\\
			0&0&1&0&1&1\\
			0&1&0&1&1&0\\
			0&1&1&0&1&0\\
			1&0&0&1&0&1\\
			1&0&1&0&0&1\\
			1&1&0&1&0&0\\
			1&1&1&0&0&0
		\end{tabular}
		\caption{The pair of functions that yield the classical process from Example~\ref{ex:finetuned} when mixed, as functions composed with the identity as local operations. The left table describes the function~$\bm e_0\circ(d_\mathrm{id},d_\mathrm{id},d_\mathrm{id})$, and the right table describes the function~$\bm e_1\circ(d_\mathrm{id},d_\mathrm{id},d_\mathrm{id})$.}
		\label{tab:ex3}
	\end{table}
	Thus, the average number of fixed-points is~$(2+0)/2=1$.
	An alternative choice of local operations is~$f=g=d_\mathrm{id}$ and~$h=d_\mathrm{not}$.
	Then~$\bm e_0\circ(d_\mathrm{id},d_\mathrm{id},d_\mathrm{not})$ has~$0$, and~$\bm e_1\circ(d_\mathrm{id},d_\mathrm{id},d_\mathrm{not})$ has~$2$ fixed-points:~$(0,1,0)$ and~$(1,0,1)$.
	As a last choice, consider~$f=g=h=d_0$, in which both composed functions have~$1$ fixed-point:~$(0,0,0)$ for the former function, and~$(1,1,1)$ for the latter function.
\end{example}

\section{A non-causal model for computation}
\label{sec:circuit}
Above we took the {\em operational approach}, where we think in terms of parties that can choose to perform arbitrary experiments within their laboratories.
A model for computation can be designed if we depart from this point of view and rather think in terms of {\em circuits}.
We briefly sketch such a model~\cite{inprep2}.
In the logical-consistency conditions of the process-matrix framework and its classical analogue, we consider {\em any local operation\/} of the parties.
In a circuit, however, there are no parties.
Thus, the logical-consistency condition has to be adopted.
Let~$\mathcal{C}$ be a classical circuit that consists of wires and gates.
\begin{definition}[Logically consistent circuit]
	\rm
	A classical circuit~$\mathcal{C}$ is called {\em logically consistent\/} if and only if there exists a probability distribution over all values on the wires where the probabilities are linear in the choice of the inputs.
\end{definition}
This definition allows gates to be connected in a circular way --- which is not allowed in the standard circuit model.
The wires which are connected to a gate on {\em one\/} side only are either the inputs or the outputs of the circuit.
Intuitively, if a closed circuit (when ignoring the open wires) admits a unique fixed-point, then it is logically consistent.
The wires, in that case, carry the value of the fixed-point.
This non-causal model for computation is more powerful than the standard circuit model.
As a demonstration, consider the following task.
\begin{task}[Fixed-point search]
	\rm
	\label{task:fp}
	Suppose you are given a black-box~$\mathrm{B}$ with the {\em promise\/} that~$\mathrm{B}$ has {\em exactly one\/} fixed-point.
	The task is to find the fixed-point with as few queries to~$\mathrm{B}$ as possible.
\end{task}
In the worst case, the minimal number of queries to~$\mathrm{B}$ in the standard circuit model is~$n-1$, where~$n$ is the input size.
\begin{theorem}[Task~\ref{task:fp} can be solved with one query to~$\mathrm{B}$]
	By using this new model of computation, a single query to black box~$\mathrm{B}$ suffices to solve Task~\ref{task:fp}.
\end{theorem}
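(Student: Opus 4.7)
The plan is to exhibit an explicit one-query circuit and invoke the fixed-point characterization from Section~\ref{sec:fixedpoint} to verify its logical consistency. Concretely, I would build the circuit $\mathcal{C}$ consisting of a single gate implementing $\mathrm{B}:\{0,1\}^n\to\{0,1\}^n$ whose output wire is fed back into its input wire, with a side-branch (a trivial copy/tap) that exposes the value carried on that wire as the output of $\mathcal{C}$. Thus $\mathrm{B}$ is invoked exactly once inside $\mathcal{C}$, and $\mathcal{C}$ has no non-trivial open input.

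Next I would verify logical consistency. The closed loop forces any assignment of values to the wires to satisfy $k = \mathrm{B}(k)$, i.e.\ to be a fixed-point of $\mathrm{B}$. By the promise, there is exactly one such $k^\star$, so there is a unique deterministic assignment of values to all internal wires, namely the constant assignment $k\equiv k^\star$. This is the circuit analogue of Theorem~\ref{thm:fp}: a function from the outgoing to the incoming wires induces a logically consistent classical process exactly when, composed with the (here trivial) local operation, it admits a unique fixed-point. Hence the distribution placing all mass on $k^\star$ is the unique distribution over wire values that is linear in the (empty set of) inputs, so $\mathcal{C}$ is logically consistent in the sense of the definition preceding Task~\ref{task:fp}.

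Finally, reading off the tapped wire of $\mathcal{C}$ yields $k^\star$, which by construction is the unique fixed-point of $\mathrm{B}$. Since $\mathrm{B}$ appears only once in $\mathcal{C}$, this constitutes a single query, proving the theorem. For contrast with the stated $n-1$ lower bound in the standard model one can note that without feedback the solver must, in the worst case, probe enough inputs to rule out all but one candidate, whereas the non-causal circuit effectively lets the wire ``guess'' the fixed-point and the consistency condition certifies it.

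The only delicate point, and the main thing I would be careful about, is justifying that what looks like a self-referential definition genuinely yields a well-defined computational output: this is precisely what the logical-consistency definition buys us, together with the uniqueness half of the promise, which rules out the ambiguity that would otherwise arise from multiple fixed-points. If the promise were relaxed to ``at least one fixed-point,'' the circuit would cease to be logically consistent (the distribution would no longer be uniquely determined), so the promise of \emph{exactly one} fixed-point is essential and is exactly what Theorem~\ref{thm:fp} requires.
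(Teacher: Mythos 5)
Your proposal is correct and follows essentially the same route as the paper: a single invocation of $\mathrm{B}$ whose output is fed back to its input, a tap that copies the loop value onto an open output wire (the paper realizes this concretely as a CNOT onto an ancilla wire initialized to $a=0$, giving $x=a\oplus b=b$), and logical consistency certified by the promise of a unique fixed-point via the characterization of Section~\ref{sec:fixedpoint}. The only difference is presentational — the paper writes the verification out explicitly with stochastic matrices, while you argue at the level of deterministic wire assignments — so no gap remains.
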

\begin{proof}
	To solve the task, we use the logically consistent non-causal circuit given by Figure~\ref{fig:circuit} with~$a=0$.
	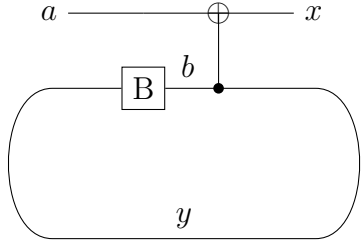
\begin{figure}
		\centering
		\begin{tikzpicture}
			\node[draw,shape=rectangle,minimum width=0.5cm,minimum height=0.5cm] (B) {$\mathrm{B}$};
			\draw[-] (B.east) -- ++(0.6,0) node[midway,above] {$b$} -- ++(1.4,0) to[out=0,in=0] ++(0,-2) -- ++(-3.5,0) node[midway,above] {$y$} to[out=180,in=180] ++(0,2) -- (B.west);
			\draw[-] (B.center)++(0,1) -- ++(-1,0) node[left] {$a$};
			\draw[-] (B.center)++(0,1) -- ++(1,0) node (O) {$\oplus$} -- ++(1,0) node[right] {$x$};
			\draw[-] (O.center) -- (O|-B);
			\fill (O|-B) circle [radius=2pt];
		\end{tikzpicture}
		\caption{Circuit to find the fixed-point~$x$ of~$\mathrm{B}$.}
		\label{fig:circuit}
	\end{figure}
	The stochastic matrix of the~$\mathrm{CNOT}$ gate~$C$ is
	\begin{eqnarray*}
		\bm{C}=\sum_{i,j=1}^n ((\bm i\oplus \bm j)\otimes\bm j)^T(\bm i\otimes\bm j)
		\,,
	\end{eqnarray*}
	and the stochastic matrix of the black-box~$\mathrm{B}$ is
	\begin{eqnarray*}
		\bm{B}=\sum_{i=1}^n \bm e_i^T\bm i\,,
	\end{eqnarray*}
	with
	\begin{eqnarray*}
		\left|\{i\,|\,i=e_i\}\right|=1
		\,.
	\end{eqnarray*}
	Since~$\mathrm{B}$ is a black-box, the values~$e_i$ are unknown to us.
	We write the probability of getting~$(b,x,y)$ conditioned on the input~$a$ with stochastic matrices and stochastic vectors:
	\begin{eqnarray*}
		P_{B,X,Y|A}(b,x,y,a)=
		\left(
		(\bm x\otimes \bm y)^T
		\bm{C}
		(\bm a\otimes \bm b)
		\right)
		\left(\bm b^T\bm{B}\bm y\right)
		\,,
	\end{eqnarray*}
	where~$\bm x$ is the stochastic vector of the value~$x$, as described on page~\pageref{p:sv}, and where~$\bm b^T\bm{B}\bm y$ is the probability of obtaining~$b$ when~$y$ is given as input to~$\mathrm{B}$.
	Now, if we use~$a=0$, we find the fixed-point~$x$:
	\begin{eqnarray*}
		P_{B,X,Y|A}(b,x,y,0)=
		\cases{
			1&for $x=b=y$\\
			0&otherwise.
		}
	\end{eqnarray*}
\end{proof}

So far we ignore how to deal with black boxes that have an {\em unknown number of fixed-points}, and whether, in that case, a computational advantage is achievable at all.

\section{Open questions}
The main open question in this field of research is whether nature violates causal inequalities.
Contrary to some superpositions of causal orders~\cite{Procopio:2015iw}, no experiment has been proposed or carried out to violate such inequalities.
Even though some results~\cite{Baumeler:2013wy,Brukner:2014vo} indicate a strong connection between non-causal correlations and non-local correlations --- the maximally achievable violation of a two-party causal inequality in the quantum setting is~$(2+\sqrt 2)/4$, which is the same value as Cirel'son's bound~\cite{Cirelson:1980fp}, and for three parties, the algebraic maximum is achievable~\cite{Greenberger:1989vx,Greenberger:1990it} ---, the connection remains unknown.
A violation of a two-party causal inequality up to the algebraic maximum by the use of higher-dimensional systems is also unknown so far; such a result would strengthen the connection between both types of correlations and could give insight on how these correlations are connected.
To our knowledge, it is also unknown which process matrices can be purified.
Since purification is of great importance in quantum information~\cite{Chiribella:2012kz}, purification of process matrices is also desired.
It has been suspected that the process matrices that {\em can\/} be purified {\em cannot\/} violate causal inequalities.
However, a counterexample has been provided recently~\cite{Baum,inprep}.
The same was conjectured for classical processes.
This has been refuted as well: For the classical case, it is known that classical processes from the deterministic-extrema polytope can be made reversible~\cite{Baum,inprep} --- this fact is the classical analogue of purification.
Of great interest are also information-theoretic considerations of non-causal correlations, similar to~\cite{Pawiowski:2009dt,Ibnouhsein:2015cf}.
For instance, it might be possible to quantify the amount of ``superposition'' versus the amount of ``loops'' a process matrix carries.
Finally, little is known on the computational power of process matrices, classical processes, and of our model of computation.
While the D-CTC model~\cite{Deutsch:1991jo} can efficiently solve problems that are in PSPACE~\cite{Aaronson:2009dy}, it is unknown what the complexity class is of the efficiently solvable problems in our model of computation.
\section*{Acknowledgements}
	We thank Mateus Ara{\'u}jo, Veronika Baumann, {\v C}aslav Brukner, Fabio Costa, Adrien Feix, Christina Giarmatzi, and Thomas Strahm for helpful discussions.
	Furthermore, we thank the anonymous referees for their detailed comments.
	The present work was supported by the Swiss National Science Foundation (SNF) and the National Centre of Competence in Research ``Quantum Science and Technology'' (QSIT).

\section*{References}
\bibliography{refs}

\end{document}